\newtheorem{definition}{Definition}
\newtheorem{theorem}{Theorem}
\newtheorem{lemma}{Lemma}
\newtheorem{fact}{Fact}
\newcommand{\BO}[1]{O\left(#1\right)}
\newcommand{\BT}[1]{\Theta\left(#1\right)}
\newcommand{\alg}[1]{\texttt{#1}}
\def\cost{\mbox{\rm cost}}
\def\OPT{\mbox{\rm OPT}}
\def\out{\mbox{\rm out}}
\title{Distributed k-Means with Outliers in General Metrics}
\author{Enrico Dandolo\\
University of Padova\\
 Padova,  Italy\\
\texttt{enrico.dandolo.1@studenti.unipd.it}\\
\And
Andrea Pietracaprina\\
University of Padova\\
 Padova,  Italy\\
\texttt{andrea.pietracaprina@unipd.it}\\
\And
Geppino Pucci\\
University of Padova\\
 Padova,  Italy\\
\texttt{geppino.pucci@unipd.it}\\
}
\begin{document}
\maketitle

\begin{abstract}
Center-based clustering is a pivotal primitive for unsupervised learning and data analysis. A popular variant is undoubtedly the k-means problem, which, given a set $P$ of points from a metric space and a parameter $k<|P|$, requires to determine a subset $S$ of $k$ centers minimizing the sum of all squared distances of points in $P$ from their closest center.  A more general formulation, introduced to deal with noisy datasets, features a further parameter $z$ and allows up to $z$ points of $P$ (outliers) to be disregarded when computing the aforementioned sum. We present a distributed coreset-based 3-round approximation algorithm for k-means with $z$ outliers for general metric spaces, using MapReduce as a computational model. Our distributed algorithm requires sublinear local memory per reducer, and yields a solution whose approximation ratio is an additive term $O(\gamma)$ away from the one achievable by the best known sequential (possibly bicriteria) algorithm, where $\gamma$ can be made arbitrarily small.  An important feature of our algorithm is that it obliviously adapts to the intrinsic complexity of the dataset, captured by its doubling dimension $D$.  To the best of our knowledge, no previous distributed approaches were able to attain similar quality-performance tradeoffs for general metrics.

\noindent
{\bf Keywords:} Clustering; k-means; Outliers; MapReduce; Coreset.
\end{abstract}
\section{Introduction} \label{sec-intro}
Clustering is a fundamental primitive for data analysis and
unsupervised learning, with applications to such diverse domains as
pattern recognition, information retrieval, bioinformatics, social
networks, and many more \cite{HennigMMR15}. Among the many approaches
to clustering, a prominent role is played by \emph{center-based
  clustering}, which aims at partitioning a set of data items into $k$
groups, where $k$ is an input parameter, according to a notion of
similarity modeled through a metric distance over the data. Different
variants of center-based clustering aim at minimizing different
objective functions. The  \emph{k-means}
problem is possibly the most popular variant of center-based
clustering. Given a set $P$ of points in a general metric space and a
positive integer $k <  |P|$, the discrete version of the problem requires to determine a 
subset $S\subset P$ of $k$ points, called \emph{centers}, so that
the sum of all squared distances of the points of $P$ from their
closest center is minimized. (In Euclidean spaces, centers may be
chosen also outside the set $P$, giving rise to the continuous version
of k-means, admitting a wider spectrum of feasible solutions.)

Since the objective function of k-means involves squares of distances, the optimal solution is at
risk of being  impacted by few ``distant'' points,
called \emph{outliers}, which may severely bias the optimal center selection towards reducing such distances.  In fact, the presence of outliers is inevitable
in large datasets, due to the present of  points which are artifacts of data
collection, either representing noisy measurements or simply erroneous
information.  To cope with this limitation, k-means admits a
heavily studied robust formulation that takes into account
outliers \cite{CharikarKMN01}: when
computing the objective function for a set of $k$ centers, the $z$
largest squared distances from the centers are not included in the sum, where 
$z<|P|$
is an additional input parameter representing a tolerable level of
noise. This formulation of the problem is known as  \emph{k-means  with
$z$ outliers}.

There is an ample and well-established literature on sequential strategies for different instantiations of center-based clustering, with and without outliers.
However, with the advent of big data, the high volumes that need to be processed often rule out the use of unscalable, sequential strategies. 
Therefore, it is of paramount importance to devise efficient clustering strategies tailored to  typical distributed computational frameworks for big data processing (e.g., MapReduce \cite{DeanG08}). 
The primary objective of this paper is to devise scalable, distributed  strategies for  discrete k-means  with $z$ outliers for general metric spaces.

\subsection{Related Work}\label{sec:previous-work}
The body of literature on solving k-means without outliers
sequentially is huge \cite{AwasthiB15}. The best sequential algorithms
to date for the discrete case on general metrics are the deterministic
$(9+\epsilon)$-approximation algorithm of \cite{AhmadianNSW20}, or the
randomized PTAS of \cite{CohenFS21} for spaces of constant doubling
dimension. A simpler and faster randomized option is the
\alg{k-means++} algorithm of \cite{ArthurV07}, whose approximation
ratio, which is $O(\log k)$ in expectation, can be lowered to a
constant by running the algorithm for $\rho k$ centers, with $\rho =
O(1)$ \cite{Wei16}.  For the distributed case, a 3-round
MapReduce algorithm for k-means is presented in
\cite{MazzettoPP19}. For arbitrarily small $\gamma >0$, the algorithm attains an approximation ratio
which is a mere $\BO{\gamma}$ term away from the best sequential
approximation attainable for the weighted variant of the problem.

A considerable number of sequential algorithms have also been proposed
for k-means with $z$ outliers. Here, we report only on the works most
relevant to our framework, and refer to \cite{DeshpandeKP20} for a
more detailed overview of the literature.  In \cite{GuptaKLMV17}, a
randomized local search strategy is described, which runs in time
$\BO{|P|z+(1/\epsilon)k^2(k+z)^2\log(|P|\Delta)}$, yielding a
274-approximate bicriteria solution with $k$ centers and $O((1/\epsilon)kz\log(|P| \Delta))$
outliers, where $\Delta$ is the ratio between the maximum and minimum
pairwise distances.  For spaces of doubling dimension $D$,
\cite{FriggstadKRS19} devises a different (deterministic) local search
strategy yielding a bicriteria solution with $(1+\epsilon)k$ centers and $z$
outliers achieving approximation $1+\BO{\epsilon}$, in time 
$\BO{(k/\epsilon)|P|^{(D/\epsilon)^{\Theta(D/\epsilon)}} \log (|P|\Delta)}$.
Finally, the LP-based approach of
\cite{KrishnaswamyLS18} yields the first non-bicriteria  solution
featuring an expected $53.002\cdot (1+\epsilon)$-approximation in time
$|P|^{\BO{1/\epsilon^3}}$.

The literature on distributed approaches to k-means with outliers is more scant. The simple, sequential coreset-based strategy of \cite{StatmanRF20} can be easily made into a 2-round MapReduce algorithm yielding a solution featuring a nonconstant 
$\BO{\log(k+z)}$ approximation and  local memory $\sqrt{|P|(k+z)}$. In \cite{GuhaLZ19}, an LP-based algorithm is developed for the coordinator model, yielding a $\BO{1+1/\epsilon}$-approximate bicriteria solution, with an excess factor $(1+\epsilon)$ either in the number of outliers or in the number of centers, using $\tilde{O}(Lk+z)$ communication words, where $L$ is the number of available workers. In the coordinator model, better bounds have been obtained for the special case of Euclidean spaces in \cite{LiG18,ChenAZ18}.

\subsection{Our Contribution}
We present a scalable coreset-based distributed MapReduce algorithm for
k-means with $z$ outliers, targeting the solution of very large
instances from general metrics. The algorithm first computes,
distributedly, a coreset of suitably selected input points which act
as representatives of the whole input, where each coreset point is
weighted in accordance to the number of input points it represents.
Then, the final solution is computed by running on the coreset an
$\alpha$-approximate sequential algorithm for the weighted variant of
the problem.  Our approach is flexible, in the sense that the
final solution can also be extracted through a sequential bicriteria
algorithm returning a larger number $\rho k$ of centers and/or
excluding a larger number $\tau z$ of outliers. Our distributed
algorithm features an approximation ratio of $\alpha+\BO{\gamma}$,
where $\gamma$ is a user-provided accuracy parameter which can be made
arbitrarily small.  The algorithm requires 3 rounds and a local memory
at each worker of size $\BO{\sqrt{|P| (\rho k+\tau z)}
  (c/\gamma)^{2D}\log^2 |P|}$, where $c$ is a constant and $D$ is the
doubling dimension of the input. For reasonable configurations of the
parameters and, in particular, low doubling dimension, the local space
is substantially smaller than the input size. It is important to
remark that the algorithm is \emph{oblivious} to $D$, in the sense
that while the actual value of this parameter (which is hard to
compute) influences the analysis, it is not needed for the algorithm
to run. As a proof of concept, we describe how the sequential bicriteria algorithms by \cite{GuptaKLMV17} and \cite{FriggstadKRS19} can be extended to handle
weighted instances, so that, when used within our MapReduce algorithm
allow us to get comparable constant approximations in distributed
fashion.

We remark that the main contributions of our algorithm are: (i) its
simplicity, since our coreset construction does not require multiple
invocations of complex, time-consuming sequential algorithms for
k-means with outliers (as is the case in \cite{GuhaLZ19}); and (ii)
its versatility, since the scheme is able to exploit any sequential
algorithm for the weighted case (bicriteria or not) to be run on the
scaled-down coreset with a minimal extra loss in accuracy. In fact, to
the best of our knowledge, ours is the first distributed algorithm
that can achieve an approximation arbitrarily close to the one of the
best available sequential solution, either exact of
bicriteria. Finally, we observe that our MapReduce algorithm can solve
instances of the problem without outliers with similar approximation
guarantees, and its memory requirements imrpove substantially upon
those of \cite{MazzettoPP19}.

\paragraph{\bf Organization of the paper.}
Section~\ref{sec-prelim} contains the main definitions and some
preliminary concepts.  Section~\ref{sec:algorithm} describes a
simplified coreset construction (Subsection~\ref{subsec-coreset}),
the full algorithm (Subsection~\ref{subsec-complete}), and a sketch of
a more space-efficient coreset construction, which yields our main
result (Subsection~\ref{subsec-improved}).  Finally,
Section~\ref{sec-wkmeans} discusses the extension of the algorithms in
\cite{GuptaKLMV17} and \cite{FriggstadKRS19} to handle weighted
instances. Section~\ref{sec:conclusions} provides some final remarks.

\section{Preliminaries} \label{sec-prelim}
Let $P$ be a set of points from a metric space with 
distance function $d(\cdot,\cdot)$. For any point $p \in P$ and subset
$S \subseteq P$, define the distance between $p$ and $S$ as
$d(p,S) = \min_{q \in S} d(p,q)$. Also, we let $p^S$ denote a point of
$S$ closest to $p$, that is, a point such that $d(p,p^S)=d(p,S)$, with
ties broken arbitrarily. The discrete \emph{k-means}
problem requires that, given $P$ and an integer $k < |P|$, a set $S
\subset P$ of $k$ \emph{centers} be determined, minimizing the
cost function $\cost(P,S)=\sum_{p \in P} d(p,S)^2$.  We
focus on a robust version of discrete k-means, known in the literature
as \emph{k-means with $z$ outliers}, where, given an additional
integer parameter $z\leq |P|$ , we seek a set $S
\subset P$ of $k$ {centers} minimizing the cost function
$\cost(P \backslash \out_z(P,S),S)$, where $\out_z(P,S)$ denotes the set
of $z$ points of $P$ farthest from $S$, with ties broken
arbitrarily. We let
$\OPT_k(P)$ (resp., $\OPT_{k,z}(P)$) denote the cost of the optimal
solution of  k-means  (resp., k-means  with $z$
outliers) on  $P$.

\begin{table}[tbp]
\scriptsize
\begin{center}
\begin{tabular}{|rcl|}
\hline
$\cost(P,S)$ & = & $\sum_{p \in P} d(p,S)^2$ \\[0.1cm]
$\OPT_k(P)$ & = & $\min_{S \subset P, |S|=k} \cost(P,S)$\\[0.1cm]
$\out_z(P,S)$ & = & $z$ points of $P$ farthest from $S$ \\[0.1cm]
$\OPT_{k,z}(P)$ & = & $\min_{S \subset P, |S|=k} \cost(P\backslash\out_z(P,S),S)$\\[0.1cm] 
$\cost(P,\mathbf{w},S)$ & = & $\sum_{p \in P} w_p d(p,S)^2$ \\[0.1cm] 
$\OPT_k(P,\mathbf{w})$ & = & 
$\min_{S \subset P, |S|=k} \cost(P,\mathbf{w},S)$\\[0.1cm]  
$\OPT_{k,z}(P,\mathbf{w})$ & = & 
$\min_{S \subset P, |S|=k} \cost(P,\mathbf{\hat{w}},S)$, where 
$\mathbf{\hat{w}}$ is obtained from $\mathbf{w}$ \\[0.1cm]  
&& by scaling 
$z$ units from points of $P$ farthest from $S$\\ \hline
\end{tabular}
\end{center}
\caption{Notations used throughout the paper: $P$ is a set of $|P|$ points,
$S$ is a subset of $P$, and $0 < z < n$ is an integer parameter.} \label{tab:notation}
\end{table}

The following two facts state technical properties that will be needed
in the analysis. 
\begin{fact} \label{fact:opt}
For every $k,z >0$ we have
$\OPT_{k+z}(P) \leq \OPT_{k,z}(P)$.
\end{fact}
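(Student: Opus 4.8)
The plan is to convert an optimal solution for k-means with $z$ outliers into a feasible solution for ordinary $(k+z)$-means by promoting the discarded outliers to centers. Concretely, I would let $S^*$ be an optimal set of $k$ centers realizing $\OPT_{k,z}(P)$, and let $Z = \out_z(P,S^*)$ be the corresponding set of $z$ outliers, so that by definition $\OPT_{k,z}(P) = \cost(P \setminus Z, S^*)$. I would then consider the enlarged center set $T = S^* \cup Z$, which contains at most $k+z$ points.

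The key observation is that $T$ already includes every outlier, so each point of $Z$ sits at distance $0$ from $T$, while each surviving point $p \in P \setminus Z$ can only move closer to $T$ than to $S^*$, since $S^* \subseteq T$ implies $d(p,T) \le d(p,S^*)$. Summing these two contributions gives $\cost(P,T) = \sum_{p \in P \setminus Z} d(p,T)^2 \le \sum_{p \in P \setminus Z} d(p,S^*)^2 = \cost(P \setminus Z, S^*) = \OPT_{k,z}(P)$. Since $T$ uses at most $k+z$ centers and $\OPT_{k+z}(P)$ is by definition the minimum cost over all $(k+z)$-center solutions, I would conclude $\OPT_{k+z}(P) \le \cost(P,T) \le \OPT_{k,z}(P)$, which is exactly the claim.

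The only point requiring a little care is that $T$ may have strictly fewer than $k+z$ distinct points, either because some chosen outliers coincide with centers of $S^*$ or because of tie-breaking in the definition of $\out_z$. I would handle this by padding $T$ with arbitrarily selected points of $P$ until it has exactly $k+z$ centers; since enlarging a center set never increases the cost function, the displayed inequality is preserved. This padding step is the main (and only mild) obstacle in the argument, and it is a purely bookkeeping matter rather than a genuine difficulty.
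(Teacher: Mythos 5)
Your proof is correct and follows essentially the same route as the paper's: both promote the $z$ outliers of an optimal outlier-aware solution $S^*$ to centers, observe that the enlarged set $\bar{S}=S^*\cup\out_z(P,S^*)$ has at most $k+z$ points and cost at most $\cost(P\setminus\out_z(P,S^*),S^*)=\OPT_{k,z}(P)$, and conclude by optimality of $\OPT_{k+z}(P)$. Your extra padding remark (to reach exactly $k+z$ distinct centers) is a harmless bookkeeping detail that the paper elides by writing $|\bar{S}|\leq k+z$.
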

\begin{proof}
Let $S^*$ be the optimal solution of k-means with $z$ outliers on $P$,
that is, such that $\cost(P \backslash \out_z(P,S^*),S) = \OPT_{k,z}(P)$,
and let $\bar{S} = S^* \cup \out_z(P,S^*)$. Since $|\bar{S}| \leq k+z$,
we have that 
\[
\OPT_{k+z}(P) \leq \cost(P,\bar{S}) \leq \cost(P \backslash \out_z(P,S^*),S) = \OPT_{k,z}(P).
\]
\end{proof}
\begin{fact} \label{fact:triangle}
For any $p,q,t \in P$, $S \subseteq P$, and $c>0$, we have:
\begin{eqnarray*}
d(p,S) & \leq & d(p,q)+d(q,S) \\ 
d(p,t)^2 & \leq & (1+c)d(p,q)^2+(1+1/c)d(q,t)^2. 
\end{eqnarray*}
\end{fact}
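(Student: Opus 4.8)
The plan is to establish the two inequalities independently, in each case reducing the set-distance statement to the ordinary triangle inequality $d(x,y)\le d(x,w)+d(w,y)$ that holds in any metric space.

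For the first inequality, I would exploit the fact that $d(q,S)$ is realized by a concrete point, namely the closest center $q^S$, so that $d(q,S)=d(q,q^S)$. Applying the ordinary triangle inequality to the triple $p,q,q^S$ gives $d(p,q^S)\le d(p,q)+d(q,q^S)=d(p,q)+d(q,S)$. Since $q^S\in S$, the quantity $d(p,S)=\min_{s\in S}d(p,s)$ is at most $d(p,q^S)$, and the claim follows by chaining the two bounds.

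For the second inequality, the natural starting point is to square the ordinary triangle inequality $d(p,t)\le d(p,q)+d(q,t)$, obtaining $d(p,t)^2\le d(p,q)^2+2\,d(p,q)\,d(q,t)+d(q,t)^2$. The only work left is to absorb the cross term into the two squared terms with the prescribed coefficients. To this end I would invoke the elementary weighted arithmetic--geometric inequality $2ab\le c\,a^2+(1/c)\,b^2$, valid for all $a,b\ge 0$ and $c>0$, which is immediate from expanding the perfect square $(\sqrt{c}\,a-b/\sqrt{c})^2\ge 0$. Setting $a=d(p,q)$ and $b=d(q,t)$ and substituting yields $d(p,t)^2\le(1+c)\,d(p,q)^2+(1+1/c)\,d(q,t)^2$, as desired.

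Neither step presents a genuine obstacle: the entire argument rests on the metric triangle inequality together with one elementary algebraic inequality. If anything, the only point requiring a modicum of care is to state the weighted AM--GM bound with the correct dependence on $c$, so that the two coefficients come out as $(1+c)$ and $(1+1/c)$ rather than symmetric constants; tracking this parameter correctly is precisely what makes the bound useful later, when $c$ is tuned to trade off the two error terms.
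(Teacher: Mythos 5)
Your proof is correct and takes essentially the same route as the paper's: the first inequality via the closest point $q^S \in S$ and the metric triangle inequality, the second by squaring the triangle inequality and absorbing the cross term with the weighted AM--GM bound $2ab \le c\,a^2 + (1/c)\,b^2$. If anything, your choice of weighting produces the coefficients $(1+c)$ and $(1+1/c)$ exactly as stated, whereas the paper's own proof uses the bound $2ab \le (1/c)a^2 + c\,b^2$ and thus obtains the swapped pair, which is equivalent after renaming $c \mapsto 1/c$.
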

\begin{proof}
The first inequality follows since $d(p,S) = d(p,p^S) \leq d(p,q^S) \leq d(p,q)+d(q,q^S)=d(p,q)+d(q,S)$.
The second inequality  follows since
$d(p,t)^2 \leq (d(p,q)+d(q,t))^2 = d(p,q)^2+d(q,t)^2+2d(p,q)d(q,t)$
and $2d(p,q)d(q,t) \leq (1/c)d(p,q)^2+cd(q,t)^2$,
where the latter inequality holds since $((1/\sqrt{c})d(p,q)-\sqrt{c}d(q,t))^2\geq 0$.
\end{proof}
In the \emph{weighted} variant of k-means, each point $p\in P$ carries
a positive integer weight $w_p$. Letting $\mathbf{w}:P\rightarrow
\mathbb{Z}^+$ denote the weight function, the problem requires to
determine a set $S \subset P$ of $k$ centers minimizing the cost
function $\cost(P,\mathbf{w},S)=\sum_{p \in P} w_p
d(p,S)^2$. Likewise, the weighted variant of k-means with $z$ outliers
requires to determine $S \subset P$ which minimizes the cost function
$\cost(P,\mathbf{\hat{w}},S)$, where $\mathbf{\hat{w}}$ is obtained
from $\mathbf{w}$ by scaling the weights associated with the points of
$P$ farthest from $S$, progressively until exactly $z$ units of
weights overall are subtracted (again, with ties broken arbitrarily).
We let $\OPT_k(P,\mathbf{w})$ and $\OPT_{k,z}(P,\mathbf{w})$ denote
the cost of the optimal solutions of the two weighted variants above,
respectively.
(Table~\ref{tab:notation} summarizes the main notations used in the paper.)

\paragraph{\bf Doubling Dimension.}
The algorithms presented in this paper are designed for general metric
spaces, and their performance is analyzed in terms of the
dimensionality of the dataset $P$, as captured by the
well-established notion of doubling dimension defined as
follows \cite{Heinonen01}.  For any $p \in P$ and $r > 0$, let the \emph{ball of radius
  $r$ centered at $p$} be the set of points of $P$ at distance at most
$r$ from $p$. The \emph{doubling dimension} of $P$ is the smallest
value $D$ such that for every $p \in P$ and $r >0$, the ball of radius
$r$ centered at $p$ is contained in the union of at most $2^D$ balls
of radius $r/2$, centered at suitable points of $P$. The doubling
dimension can be regarded as a generalization of the Euclidean
dimensionality to general spaces. In fact, it is easy to see that any
$P \subset \mathbb{R}^{\rm dim}$ under Euclidean distance has doubling
dimension $\BO{\mbox{\rm dim}}$.  

\paragraph{\bf Model of Computation.}
We present and analyze our algorithms using the \emph{MapReduce} model
of computation \cite{DeanG08,PietracaprinaPRSU12}, which is
one of the reference models for the distributed processing
of large datasets, and has been effectively used for clustering
problems (e.g., see
\cite{SreedharKR17,CeccarelloPP19,BakhthemmatI20}). A MapReduce
algorithm specifies a sequence of \emph{rounds}, where in each round,
a multiset $X$ of key-value pairs is first transformed into a new
multiset $X'$ of pairs by applying a given \emph{map function} in
parallel to each individual pair, and then into a final multiset $Y$
of pairs by applying a given \emph{reduce function} (referred to
as \emph{reducer}) in parallel to each subset of pairs of $X'$ having
the same key.  When the algorithm is executed on a distributed
platform, the applications of the map and reduce functions in each
round are (automatically) assigned to the available processors so to
maximize parallelism.  The data, maintained into a distributed storage
system, are brought to the processors' main memories in chunks, when
needed by the map and reduce functions.  Key performance indicators
are the number of rounds and the maximum local memory required by
individual executions of the map and reduce functions. Efficient
algorithms typically target few (possibly, constant) rounds and
substantially sublinear local memory.  We remark that our algorithms
can be straighforwardly rephrased for other distributed models, such
as the popular \emph{Massively Parallel Computation} (MPC) model
\cite{BeameKS13}.

\section{MapReduce algorithm for k-means with $z$ outliers} \label{sec:algorithm}
In this section, we present a MapReduce algorithm for k-means with $z$
outliers running in $\BO{1}$ rounds with sublinear local memory.  As typical of many efficient algorithms for
clustering and related problems, our algorithm uses the following
coreset-based approach. First, a suitably small weighted coreset $T$ is
extracted from the input $P$, such that each point $p \in P$ has a
``close'' proxy $\pi(p) \in T$, and the weight $w_q$ of each $q \in
T$ is the number of points of $P$ for which $q$ is proxy. Then, the
final solution is obtained by running on $T$ the best (possibly slow)
sequential approximation algorithm for weighted k-means with $z$
outliers. Essential to the success of this strategy is that $T$ can be
computed efficiently in a distributed fashion, its size is much smaller than
$|P|$, and it represents $P$ well, in the sense that: (i) the cost of any
solution with respect to $P$ can be approximated well in $T$; and (ii)
$T$ contains a good solution to $P$.

In Subsection~\ref{subsec-coreset} we describe a coreset
construction, building upon the one presented in
\cite{Har-PeledM04,MazzettoPP19} for the case without outliers, but
with crucial modifications and a new analysis needed to handle the more general cost
function, and to allow the use of bicriteria approximation algorithms
on the coreset. In Subsection~\ref{subsec-complete} we present and
analyze the final algorithm, while in Subsection~\ref{subsec-improved}
we outline how a refined coreset construction can yield  substantially lower local 
memory requirements.

\subsection{Flexible coreset construction} \label{subsec-coreset}
We first formally define two properties that capture the quality of
the coreset computed by our algorithm. Let $T$ be a subset of
$P$ weighted according to a proxy function $\pi : \; P \rightarrow T$,
where the weight of each $q \in T$ is $w_q = |\{p \in P : \; \pi(p)=q\}|$. 
\begin{definition} \label{def-approx}
For $\gamma \in (0,1)$, $(T,\mathbf{w})$ is a
\emph{$\gamma$-approximate coreset for $P$ with respect to $k$ and
  $z$} if for every $S,Z \subset P$, with $|S| \leq k$ and $|Z| \leq z$, 
we have:
\[
|\cost(P \backslash Z,S)-\cost(T,\hat{\mathbf{w}},S)| \leq \gamma \cdot \cost(P \backslash Z,S),
\]
where $\hat{\mathbf{w}}$ is such that for each $q \in T$, $\hat{w}_q = w_q-|\{p \in Z : \; \pi(p)=q\}|$.
\end{definition}

\begin{definition} \label{def-centroid}
For $\gamma \in (0,1)$, $(T,\mathbf{w})$ is a \emph{$\gamma$-centroid set for $P$
with respect to $k$ and $z$} if there exists a set $X\subseteq T$ of at most $k$ points 
such that 
\[
\cost(P\backslash\out_z(P,X),X) \leq (1+\gamma)\cdot \OPT_{k,z}(P).
\]
\end{definition}
In other words, a $\gamma$-approximate coreset can
faithfully estimate (within relative error $\gamma$) the cost of
\emph{any} solution with respect to the entire input dataset $P$,
while a $\gamma$-centroid set is guaranteed to contain \emph{one} good
solution for $P$. The following technical lemma states a sufficient condition
for a weighted set to be an approximate coreset.
\begin{lemma} \label{lem:approx}
Let $(T,\mathbf{w})$ be such that 
$\sum_{p \in P} d(p,\pi(p))^2 \leq \delta \cdot \OPT_{k,z} (P)$.
Then, $(T,\mathbf{w})$ is a $\gamma$-approximate coreset for $P$
with respect to $k$ and $z$, with $\gamma=\delta+2\sqrt{\delta}$.
\end{lemma}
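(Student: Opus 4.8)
The plan is to first rewrite the weighted cost on the coreset in terms of the original points, and then control the pointwise discrepancy introduced by replacing each point by its proxy. First I would unpack the definition of $\hat{\mathbf{w}}$: since $w_q$ counts all points of $P$ with proxy $q$ and we subtract exactly those lying in $Z$, we get $\hat{w}_q = |\{p \in P \backslash Z : \pi(p)=q\}|$, whence $\cost(T,\hat{\mathbf{w}},S) = \sum_{p \in P \backslash Z} d(\pi(p),S)^2$. Thus the quantity to bound becomes $|\sum_{p \in P \backslash Z} (d(p,S)^2 - d(\pi(p),S)^2)|$, and it clearly suffices to bound $\sum_{p \in P \backslash Z} |d(p,S)^2 - d(\pi(p),S)^2|$.

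Next, writing $a_p = d(p,S)$, $b_p = d(\pi(p),S)$, and $e_p = d(p,\pi(p))$, the triangle inequality (first part of Fact~\ref{fact:triangle}) gives $|a_p - b_p| \leq e_p$. Combining this with the identity $a_p^2 - b_p^2 = (a_p - b_p)(a_p + b_p)$ and the bound $a_p + b_p \leq 2a_p + e_p$, I obtain the pointwise estimate $|a_p^2 - b_p^2| \leq 2 a_p e_p + e_p^2$. Summing over $p \in P \backslash Z$ and applying Cauchy--Schwarz to the cross term $\sum_p a_p e_p$ yields $\sum_p |a_p^2 - b_p^2| \leq 2\sqrt{A}\,\sqrt{E} + E$, where I abbreviate $A = \cost(P \backslash Z, S)$ and $E = \sum_{p \in P \backslash Z} e_p^2$.

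The final ingredient---and the step that makes the $\sqrt{\delta}$ term appear---is to relate $E$ to $A$. By hypothesis $E \leq \sum_{p \in P} d(p,\pi(p))^2 \leq \delta \cdot \OPT_{k,z}(P)$, so what I still need is $\OPT_{k,z}(P) \leq \cost(P \backslash Z, S) = A$ for \emph{every} admissible pair $(S,Z)$ with $|S| \leq k$ and $|Z| \leq z$. This holds because discarding the $z$ farthest points is the cheapest way to drop at most $z$ points, i.e. $\cost(P \backslash \out_z(P,S),S) \leq \cost(P \backslash Z, S)$, and because padding $S$ up to exactly $k$ centers only lowers the cost, so that $\OPT_{k,z}(P) \leq \cost(P \backslash \out_z(P,S),S)$. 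Substituting $E \leq \delta A$ into the previous estimate gives $\sum_p |a_p^2 - b_p^2| \leq 2\sqrt{\delta}\,A + \delta A = (\delta + 2\sqrt{\delta})\,A$, which is exactly the desired inequality with $\gamma = \delta + 2\sqrt{\delta}$.

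I expect the main obstacle to be precisely this last comparison $\OPT_{k,z}(P) \leq \cost(P \backslash Z, S)$, since it is what converts the \emph{absolute} bound on the total proxy error into a \emph{relative} error against the particular solution $(S,Z)$, and it must be established uniformly over all choices of $S$ and $Z$ so that the coreset guarantee of Definition~\ref{def-approx} holds for all of them simultaneously. The remaining manipulations (the pointwise squared-distance bound and the Cauchy--Schwarz step) are routine.
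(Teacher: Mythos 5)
Your proof is correct and follows essentially the same route as the paper's: the same difference-of-squares factorization with the pointwise triangle-inequality bound $|d(p,S)-d(\pi(p),S)|\leq d(p,\pi(p))$, and the same key comparison $\OPT_{k,z}(P)\leq \cost(P\backslash Z,S)$ (which you justify more explicitly than the paper does) to convert the absolute proxy-error bound into a relative one. The only cosmetic difference is that you control the cross term $\sum_p d(p,S)\,d(p,\pi(p))$ via Cauchy--Schwarz, while the paper uses the equivalent term-wise inequality $2ab\leq ca^2+(1/c)b^2$ with $c=\sqrt{\delta}$; both yield exactly the $2\sqrt{\delta}\cdot\cost(P\backslash Z,S)$ contribution.
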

\begin{proof}
Consider two arbitrary subsets $S,Z \subset P$ with $|S|=k$ and $|Z|=z$, and 
let $\hat{\mathbf{w}}$ be obtained from $\mathbf{w}$ by subtracting the contributions of the elements in $Z$ from the weights of their proxies. We have:

\begin{eqnarray*}
\lefteqn{
\left|
\cost(P \backslash Z,S)-\cost(T,\hat{\mathbf{w}},S) 
\right|  
 = |\sum_{p \in P \backslash Z} d(p,S)^2-\sum_{q \in T}\hat{w}_q d(q,S)^2 | }\\
& = & 
|\sum_{p \in P \backslash Z} d(p,S)^2-\sum_{p \in P \backslash Z} d(\pi(p),S)^2|  \\
& \leq & 
\sum_{p \in P \backslash Z}
\left| d(p,S)^2-d(\pi(p),S)^2 \right|  \\
& = & 
\sum_{p \in P \backslash Z}
(d(p,S)+d(\pi(p),S))|d(p,S)-d(\pi(p),S)|  \\
& \leq & 
\sum_{p \in P \backslash Z}
(d(p,\pi(p))+2d(p,S))d(p,\pi(p)) \\
& & (\mbox{since, by Fact~\ref{fact:triangle}}, -d(p,\pi(p) \leq d(p,S)-d(\pi(p),S) \leq d(p,\pi(p)) \\
& = & 
\sum_{p \in P \backslash Z} d(p,\pi(p))^2
+2\sum_{p \in P \backslash Z} d(p,S) \cdot d(p,\pi(p)).
\end{eqnarray*}
By the hypothesis, we have that 
$\sum_{p \in P} d(p,\pi(p))^2 \leq \delta \cdot \OPT_{k,z}(P)$,
and since $\OPT_{k,z}(P) \leq \cost(P \backslash Z,S)$,
the first sum is upper bounded by $\delta \cdot \cost(P \backslash Z,S)$.
Let us now concentrate on the second summation.
As observed in the proof of Fact~\ref{fact:triangle},
for any $a,b,c >0$, we have that $2ab \leq ca^2+(1/c)b^2$. 
Therefore,
\begin{eqnarray*}
2\sum_{p \in P \backslash Z} d(p,S) \cdot d(p,\pi(p)) 
& \leq & 
\sqrt{\delta} \sum_{p \in P \backslash Z} d(p,S)^2
+ \left({1/\sqrt{\delta}}\right) \sum_{p \in P \backslash Z} d(p,\pi(p))^2 \\
& \leq & 
\sqrt{\delta} \sum_{p \in P \backslash Z} d(p,S)^2
+ \sqrt{\delta} \cdot \OPT_{k,z}(P) \\
& \leq & 
2 \sqrt{\delta} \cdot \cost(P \backslash Z,S). 
\end{eqnarray*}
The lemma follows since $\gamma=\delta+2\sqrt{\delta}$.
\end{proof}
\sloppy
The first ingredient of our coreset construction is a primitive, called
\alg{CoverWithBalls}, which, given any set $X \subset P$, a precision
parameter $\delta$, and a distance threshold $R$, builds a weighted set $Y
\subset P$ whose size is not much larger than $X$, such that for each
$p \in P$, $d(p,Y) \leq \delta \max\{R, d(q,X)\}$.  Specifically, the
primitive identifies, for each $p \in P$, a \emph{proxy} $\pi(p) \in Y$ such
that $d(p,\pi(p)) \leq \delta \max\{R, d(p,X)\}$.  For every $q \in Y$, the returned weight $w_q$ is set 
equal to the the number of points
of $P$ for which $q$ is proxy. Primitive \alg{CoverWithBalls} has been
originally introduced in \cite{MazzettoPP19} and is based on a simple
greedy procedure. For completeness, we report the pseudocode below, as
Algorithm~\ref{alg:cover}.
\begin{algorithm}
    $Y \leftarrow \emptyset$\;
    \While{$P \neq \emptyset$}{
        $q \longleftarrow $ arbitrarily selected point in $P$\;
        $Y \longleftarrow Y \cup \{ q \}; w_q \longleftarrow 1$\;
        \ForEach{$p \in P$}{
            \If{ $d(p,q) \leq \delta \max \{R,d(p,X) \} $}{
                remove $p$ from $P$\;
                $w_q \longleftarrow w_q+1$; \{{\tt implicitly, $q$ becomes the proxy $\pi(p)$ of $p$}\}
            }         
        }
    }
    \Return $(Y,\mathbf{w})$
    \caption{\alg{CoverWithBalls}$(P,X,\delta,R)$} \label{alg:cover}
\end{algorithm}
We wish to remark that the proxy function $\pi$ is not explicitly represented and is reflected only
in the vector  $\mathbf{w}$. 
In our coreset construction, \alg{CoverWithBalls}
will be invoked multiple times to compute coresets of increasingly higher
quality. Observe that the output $(Y,\mathbf{w})$ of
\alg{CoverWithBalls}$(P,X,\delta,R)$ is implicitly associated with
a map $\pi \; : \; P \rightarrow Y$ such that:
\begin{itemize}
\item
for every $q \in Y$, $w_q = |\{p \in P : \; \pi(p)=q\}|;$
\item
for every $p \in P\backslash Y$, $d(p,\pi(p)) \leq \delta \max \{R,d(p,X) \}$.
\end{itemize}

The second ingredient of our distributed coreset construction is some
sequential algorithm, referred to as \alg{SeqkMeans} in the following,
which, given in input a dataset $Q$ and an integer $k$, computes a
$\beta$-approximate solution to the standard k-means problem \emph{without
outliers} with respect to $Q$ and $k$.

We are ready to present a 2-round MapReduce algorithm, dubbed
\alg{MRcoreset}, that, on input a dataset $P$, the values $k$ and $z$,
and a precision parameter $\gamma$, combines the two ingredients
presented above to produce a weighted coreset which is both an
$O(\gamma)$-approximate coreset and an $O(\gamma)$-centroid set with
respect to $k$ and $z$.  The computation performed by
\alg{MRcoreset}$(P,k,z,\gamma)$ in each round is described below.

\paragraph{\bf First Round} The dataset $P$ is evenly partitioned into $L$ equally sized subsets, $P_1, P_2, \ldots, P_L$, through a suitable map function. Then, in parallel, the following steps are performed by a distinct
reducer on each $P_i$, with $1\leq i\leq L$:
\begin{enumerate}
\item
\alg{SeqkMeans} is invoked with input $(P_i,k')$,
where $k'$ is a suitable
function of $k$ and $z$ that will be fixed later in the analysis,
returning a solution $S_i \subset P_i$.
\item 
Let $R_i = \sqrt{\cost(P_i, S_i)/|P_i|}$. The primitive
\alg{CoverWithBalls}$(P_i,S_i,\gamma/\sqrt{2\beta},R_i)$ is invoked, returning a weighted set of points $(C_i,\mathbf{w}^{C_i})$.
\end{enumerate}

\paragraph{\bf Second Round} 
The same partition of $P$ into $P_1, P_2, \ldots, P_L$ is used.  A
suitable map function is applied so that each reducer receives a
distinct $P_i$ and the triplets ($|P_j|$, $R_j$, $C_j$) for
all $1 \leq j \leq L$ from Round~1 (the weights $\mathbf{w}^{C_j}$ are
ignored).  Then, for $1 \leq i \leq L$, in parallel, the reducer in
charge of $P_i$ sets $R=\sqrt{\sum_{j=1}^{L} |P_j|\cdot R_j^2/|P|}$, $C =
\cup_{j=1}^{L} C_j$, and invokes \alg{CoverWithBalls}$(P_i, C, \gamma/\sqrt{2\beta},
R)$.  The invocation returns the weighted set
$(T_i,\mathbf{w}^{T_i})$. \\[0.2cm]
The final coreset returned by the
algorithm is $(T,\mathbf{w}^{T})$, where $T=\cup_{i=1}^L T_i$ and
$\mathbf{w}^{T}$ is the weight function such that $\mathbf{w}^{T_i}$
is the projection of $\mathbf{w}^{T}$ on $P_i$, for $1 \leq i \leq L$. 

We now analyze the main properties of the weighted
coreset returned by \alg{MRcoreset}, which will be exploited
in the next subsection to derive the performance-accuracy tradeoffs
featured by our distributed solution to k-means with $z$ outliers. 
Recall that we assumed that \alg{SeqkMeans} is instantiated with an
approximation algorithm that, 
when invoked on input $(P_i,k')$, returns a set
$S_i \subset P_i$ of $k'$ centers such that
$\cost(P_i,S_i) \leq \beta \cdot \OPT_{k'}(P_i)$, for some $\beta \geq 1$. 
Let $D$ denote the doubling dimension of $P$.
The following lemma is a consequence of the analysis in 
\cite{MazzettoPP19} for the case without outliers, and its proof (omitted) 
is a simple composition of the proofs of Lemmas 3.6, 3.11, and 3.12
in that paper. 
\begin{lemma} \label{lem:mazzetto}
Let $(C,\mathbf{w}^C)$ and $(T,\mathbf{w}^T)$ be the weighted coresets
computed by {\rm \alg{MRcoreset}}$(P,k,z,\gamma)$, and let
$\pi^C, \pi^T$ be the corresponding proxy functions. We have:
\[
\sum_{p \in P} d(p,\pi^X(p))^2 \leq 4 \gamma^2 \cdot \OPT_{k'}(P),
\;\;\; (\mbox{with $X = C,T$})
\]
and
\begin{eqnarray*}
|C| & = & \BO{|L|\cdot  k'\cdot (8 \sqrt{2\beta}/\gamma)^{D}\cdot \log |P|}, \\
|T| & = & \BO{|L|^2\cdot k'\cdot (8 \sqrt{2\beta}/\gamma)^{2D}\cdot \log^2 |P|}. 
\end{eqnarray*}
\end{lemma}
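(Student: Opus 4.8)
The plan is to establish the two claims separately: the bound on the total squared proxy displacement and the size bounds, treating each of the two \alg{CoverWithBalls} invocations (Rounds~1 and~2) in turn. The displacement bound is stated purely in terms of $\OPT_{k'}(P)$, a without-outliers quantity, so the whole argument can proceed as in the outlier-free setting, with the outlier parameter $z$ entering only through the choice of $k'$.

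For the displacement bound I would start from the guarantee of \alg{CoverWithBalls}: every invocation with parameters $(\cdot,X,\delta,R)$, $\delta=\gamma/\sqrt{2\beta}$, produces a proxy map with $d(p,\pi(p))\le\delta\max\{R,d(p,X)\}$ for all $p$, hence $d(p,\pi(p))^2\le(\gamma^2/(2\beta))(R^2+d(p,X)^2)$. For Round~1, summing over $p\in P_i$ and using $|P_i|R_i^2=\cost(P_i,S_i)$ (by the definition of $R_i$) collapses the right-hand side to $(\gamma^2/\beta)\cost(P_i,S_i)\le\gamma^2\OPT_{k'}(P_i)$, where the last step invokes the $\beta$-approximation guarantee of \alg{SeqkMeans}. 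The one genuinely new ingredient is a \emph{partition inequality} for discrete k-means, $\sum_{i=1}^L\OPT_{k'}(P_i)\le 4\,\OPT_{k'}(P)$: taking a global optimum $S^*\subset P$ and projecting each center $c\in S^*$ to its nearest point $\phi_i(c)\in P_i$ yields a feasible $k'$-center set for $P_i$ with $d(p,\phi_i(p^{S^*}))\le 2\,d(p,S^*)$ for every $p\in P_i$ (since $\phi_i(c)$ is the $P_i$-point closest to $c$ and $p$ itself lies in $P_i$), so $\OPT_{k'}(P_i)\le 4\,\cost(P_i,S^*)$, and summing over $i$ telescopes to $4\,\OPT_{k'}(P)$. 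Combining gives the claim for $C$. For Round~2 I would reuse the same estimate with $X=C$ and the aggregate radius $R$, noting that $|P|R^2=\sum_j\cost(P_j,S_j)\le 4\beta\,\OPT_{k'}(P)$ and $\cost(P,C)\le\sum_{p\in P}d(p,\pi^C(p))^2\le 4\gamma^2\,\OPT_{k'}(P)$; since $\gamma<1\le\beta$ the cross term is dominated and the bound $4\gamma^2\OPT_{k'}(P)$ survives.

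For the size bounds I would run the standard doubling-dimension packing argument on each \alg{CoverWithBalls} call. Partition the points by distance to the reference set $X$ into the inner shell $\{p:d(p,X)\le R\}$ and geometric rings $\{p:2^{j-1}R<d(p,X)\le 2^j R\}$. Within each region any two centers greedily selected by \alg{CoverWithBalls} are at distance more than $\delta$ times the local radius, so a packing argument driven by the doubling dimension $D$ bounds their number by $|X|\cdot(c/\delta)^D$ for a constant $c$; with $\delta=\gamma/\sqrt{2\beta}$ this is $|X|\cdot(c\sqrt{2\beta}/\gamma)^D$. The number of rings is $\BO{\log|P|}$ because $\max_p d(p,X)\le\sqrt{|P|}\,R$ for the radii used, so summing over the shell, over rings, and over the $L$ reducers gives $|C|=\BO{L\,k'\,(8\sqrt{2\beta}/\gamma)^D\log|P|}$; feeding $|C|$ back as the reference set in Round~2 squares the doubling factor and the logarithm and adds a second factor of $L$ from summing $|T_i|$ over $i$, yielding the stated bound on $|T|$.

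I expect the main obstacle to be the size bound rather than the displacement bound: obtaining a clean $\BO{\log|P|}$ cap on the number of rings requires tying the ball radii $R_i$ (and the aggregate $R$) to the per-partition average cost so that the aspect ratio seen by the packing argument stays polynomial in $|P|$, and the packing constant must be tracked carefully to land exactly on the base $8\sqrt{2\beta}/\gamma$. The displacement bound, by contrast, reduces entirely to the elementary partition inequality above, which is where the factor $4$ in $4\gamma^2\OPT_{k'}(P)$ originates.
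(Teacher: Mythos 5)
Your proposal is correct and takes essentially the same route as the paper, which omits this proof and defers to the analysis of \cite{MazzettoPP19} (Lemmas 3.6, 3.11, 3.12): your reconstruction---the \alg{CoverWithBalls} displacement guarantee combined with the definition of $R_i$, the $\beta$-approximation of \alg{SeqkMeans}, and the partition inequality $\sum_{i=1}^L \OPT_{k'}(P_i) \leq 4\,\OPT_{k'}(P)$ for the quality bound, plus the ring-decomposition packing argument with polynomial aspect ratio for the size bounds---is exactly that composition. Indeed, the same computations appear almost verbatim in the paper's own proofs of Lemma~\ref{lem:improvement_C'}, Lemma~\ref{lem:improvement_T}, and Theorem~\ref{thm:quality} (where the $4\beta$ factor and the $4\cdot\OPT$ partition bound play the identical roles), so your blind reconstruction is faithful to the intended argument.
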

As noted in the introduction, while the doubling dimension $D$ appears
in the above bounds, the algorithm does not require the knowledge of
this value, which would be hard to compute.  The next theorem
establishes the main result of this section regarding the quality of
the coreset $(T,\mathbf{w}^T)$ with respect to the k-means problem
with $z$ outliers.
\begin{theorem} \label{thm:quality}
Let $\gamma$ be such that $0< \gamma \leq \sqrt{3/8}-1/2$.   By setting
$k'=k+z$ in the first round, {\rm \alg{MRcoreset}}$(P,k,z,\gamma)$
returns a weighted coreset $(T,\mathbf{w}^{T})$ which is a
$(4\gamma+4\gamma^2)$-approximate coreset and a
$27\gamma$-centroid set for $P$ with respect to $k$ and $z$.
\end{theorem}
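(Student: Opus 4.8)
The plan is to establish the two assertions separately, since they correspond to the two independent quality measures in Definitions~\ref{def-approx} and~\ref{def-centroid}, and to feed the proxy-cost bound of Lemma~\ref{lem:mazzetto} into each.

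\emph{Approximate coreset.} This part I expect to be immediate. Setting $k'=k+z$ and instantiating Lemma~\ref{lem:mazzetto} with $X=T$ gives $\sum_{p\in P}d(p,\pi^T(p))^2\le 4\gamma^2\,\OPT_{k+z}(P)$, and Fact~\ref{fact:opt} upgrades the right-hand side to $4\gamma^2\,\OPT_{k,z}(P)$. Lemma~\ref{lem:approx} then applies verbatim with $\delta=4\gamma^2$: since $\delta+2\sqrt{\delta}=4\gamma^2+4\gamma$, the pair $(T,\mathbf{w}^T)$ is a $(4\gamma+4\gamma^2)$-approximate coreset for $P$ with respect to $k$ and $z$. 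The stated range $\gamma\le\sqrt{3/8}-1/2$ is precisely what forces $4\gamma+4\gamma^2\le 1/2<1$, so the parameter is admissible for Definition~\ref{def-approx}.

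\emph{Centroid set.} Here I would exhibit an explicit good solution inside $T$. Let $S^*$ be an optimal solution of k-means with $z$ outliers on $P$ and $O=\out_z(P,S^*)$, so that $\cost(P\setminus O,S^*)=\OPT_{k,z}(P)$, and set $X=\{\pi^T(s):s\in S^*\}\subseteq T$, which has at most $k$ points. Because $\out_z(P,X)$ discards the $z$ points most expensive for $X$, we have $\cost(P\setminus\out_z(P,X),X)\le\cost(P\setminus O,X)$, so it suffices to bound the latter. For a non-outlier $p$ with closest optimal center $p^{S^*}$, the point $\pi^T(p^{S^*})$ lies in $X$, hence $d(p,X)\le d(p,\pi^T(p^{S^*}))$; applying the squared triangle inequality of Fact~\ref{fact:triangle} with a free parameter $c>0$ and summing over $p\in P\setminus O$ gives $\cost(P\setminus O,X)\le(1+c)\,\OPT_{k,z}(P)+(1+1/c)\,E^2$, where $E^2=\sum_{s\in S^*}n_s\,d(s,\pi^T(s))^2$ and $n_s$ counts the non-outliers assigned to $s$.

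The hard part will be bounding the proxy-error term $E^2$. The aggregate guarantee of Lemma~\ref{lem:mazzetto} charges every point of $P$ once, whereas $E^2$ charges the proxy distance of each center once per point of its cluster, so the lemma cannot be used as a black box: the cluster-size weighting can inflate the error. To control it I would open up the second-round \alg{CoverWithBalls} guarantee $d(s,\pi^T(s))\le(\gamma/\sqrt{2\beta})\max\{R,d(s,C)\}$ and split the centers according to whether $d(s,C)\le R$. For the first group the total contribution is at most $(\gamma^2/(2\beta))R^2\sum_s n_s\le(\gamma^2/(2\beta))\,|P|\,R^2$, and since $|P|\,R^2=\sum_{j}\cost(P_j,S_j)=\BO{\beta\,\OPT_{k+z}(P)}$ the factor $\beta$ cancels and this is $\BO{\gamma^2}\OPT_{k,z}(P)$. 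For the second group one unfolds $d(s,C)$ through the first-round cover and the per-partition radii $R_j$, relating $\sum_s n_s\,d(s,C)^2$ back to $\OPT_{k+z}(P)$ by the same partition-based accounting that underlies Lemma~\ref{lem:mazzetto}; this again yields $\BO{\gamma^2}\OPT_{k,z}(P)$. Combining the two groups gives $E^2=\BO{\gamma^2}\OPT_{k,z}(P)$. Substituting this into the displayed bound and optimizing $c=\BT{\gamma}$ collapses $\cost(P\setminus O,X)$ to $(1+\BO{\gamma})\OPT_{k,z}(P)$; a careful tracking of the absolute constants, using once more the range $\gamma\le\sqrt{3/8}-1/2$ to absorb the quadratic terms into the linear one, produces exactly the factor $1+27\gamma$ and hence the claimed $27\gamma$-centroid property.
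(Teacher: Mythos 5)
Your first claim (the $(4\gamma+4\gamma^2)$-approximate coreset property) is argued exactly as the paper does it: Lemma~\ref{lem:mazzetto} with $k'=k+z$, Fact~\ref{fact:opt} to pass from $\OPT_{k+z}(P)$ to $\OPT_{k,z}(P)$, and Lemma~\ref{lem:approx} with $\delta=4\gamma^2$. That part is correct and complete.

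For the centroid-set property your architecture is sound and genuinely different from the paper's: you work directly on $P\setminus O$ and reduce everything to the weighted proxy error $E^2=\sum_{s\in S^*} n_s\, d(s,\pi^T(s))^2$, and you correctly identify the crux, namely that the cluster sizes $n_s$ forbid using Lemma~\ref{lem:mazzetto} as a black box. (The paper sidesteps this entirely: it lifts $\cost(P\setminus Z^*,X)$ to $\cost(C,\hat{\mathbf{w}}^C,X)$ using that $(C,\mathbf{w}^C)$ is itself a $\sigma$-approximate coreset, and in the resulting sum over $q\in C$ the offending distance is controlled \emph{pointwise} by $d(q^{S^*},C)\le d(q,S^*)$, so the weighted sum collapses to $\cost(C,\hat{\mathbf{w}}^C,S^*)\le(1+\sigma)\OPT_{k,z}(P)$ and no multiplicities ever arise.) Your handling of the $R$-part of $E^2$ works precisely because $R$ is the same for every center, so $\sum_s n_s R^2\le|P|R^2=\sum_j\cost(P_j,S_j)$. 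The gap is the second group. The natural reading of ``unfolding $d(s,C)$ through the first-round cover and the per-partition radii'' is to apply the Round-1 guarantee $d(s,C)\le d(s,\pi^C(s))\le(\gamma/\sqrt{2\beta})\max\{R_{i(s)},d(s,S_{i(s)})\}$, where $P_{i(s)}$ is the block containing $s$; but this reintroduces exactly the inflation you flagged, because $R_{i(s)}$ and $d(s,S_{i(s)})$ are \emph{not} uniform over centers: $\sum_s n_s R_{i(s)}^2$ can exceed $\sum_i|P_i|R_i^2$ by a factor of $L$ (a giant cluster whose center sits in the block with largest $R_i$), and $\sum_s n_s\, d(s,S_{i(s)})^2$ can exceed $\sum_i\cost(P_i,S_i)$ by a factor up to $|P|$ (all of a block's cost concentrated on the single point $s$, then charged $n_s$ times). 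Lemma~\ref{lem:mazzetto}'s accounting charges each point of $P$ once, and nothing in your sketch converts the center-multiplicity sum into a per-point sum. The missing (and sufficient) idea is to spread the charge over the cluster \emph{before} invoking any cover guarantee: for every non-outlier $p$ with $p^{S^*}=s$, $d(s,C)\le d(s,p)+d(p,\pi^C(p))$, hence
\begin{equation*}
\sum_{s\in S^*} n_s\, d(s,C)^2 \;\le\; \sum_{p\in P\setminus O} 2\bigl(d(p,p^{S^*})^2+d(p,\pi^C(p))^2\bigr)\;\le\; 2\,\OPT_{k,z}(P)+8\gamma^2\,\OPT_{k+z}(P)\;\le\;(2+8\gamma^2)\,\OPT_{k,z}(P),
\end{equation*}
by Lemma~\ref{lem:mazzetto} and Fact~\ref{fact:opt}. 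With this substituted in, your computation does go through (indeed with a constant comfortably below $27$), but as written the decisive step is asserted rather than proved, and the route you point to would fail.
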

\begin{proof}
Define $\sigma=4\gamma+4\gamma^2$ and, by the hypothesis
on $\gamma$, note that $\sigma \leq 1/2$.
The fact that $(T,\mathbf{w}^{T})$ is
a $\sigma$-approximate coreset 
for $P$ with respect to $k$ and $z$,
follows directly from Fact~\ref{fact:opt},
Lemma~\ref{lem:approx} (setting $\delta = 4\gamma^2$),
and Lemma~\ref{lem:mazzetto}.
We are left to show that 
$(T,\mathbf{w}^{T})$ is
a $27\gamma$-centroid set
for $P$ with respect to $k$ and $z$. 
Let $S^* \subset P$ be the optimal set of $k$ centers and let
$Z^* = \out_z(P,S^*)$. Hence, $\cost(P \backslash Z^*, S^*) = \OPT_{k,z}(P)$.
Define $X = \{p^T : \; p \in S^*\} \subset T$.
We show that $X$ is 
a good solution for the k-means problem with $z$ outliers for $P$. 
Clearly, $\cost(P \backslash \out_z(P,X),X) \leq \cost(P \backslash Z^*,X)$,
hence it is sufficient to upper bound the latter term.  
To
this purpose, consider the weighted set $(C,\mathbf{w}^C)$
computed at the end of Round 1, and let $\pi^C$ be the proxy function
defining the weights  $\mathbf{w}^C$. Arguing as before, we can
conclude that $(C,\mathbf{w}^C)$ is also a
$\sigma$-approximate coreset 
for $P$ with respect to $k$ and $z$. Therefore,
since $\sigma \leq 1/2$, 
\[
\cost(P \backslash Z^*,X)
\leq 
{\frac{1}{1-\sigma}} \cost(C,\hat{\mathbf{w}}^C,X)
\leq 
(1+2\sigma) \cost(C,\hat{\mathbf{w}}^C,X),
\]
where $\hat{\mathbf{w}}^C$ is obtained from $\mathbf{w}^C$ by
subtracting the contributions of
the elements in $Z^*$ from the weights of their proxies. 
Then, we have:
\begin{eqnarray*}
\cost(C,\hat{\mathbf{w}}^C,X) 
& = & \sum_{q \in C} \hat{{w}}^C_q d(q,X)^2 \\
& \leq &
(1+\gamma) \sum_{q \in C} \hat{{w}}^C_q d(q,q^{S^*})^2 +
(1+(1/\gamma))  \sum_{q \in C} \hat{{w}}^C_q d(q^{S^*},X)^2 \\
&& \mbox{(by Fact~\ref{fact:triangle})} \\
& = &
(1+\gamma) \cost(C,\hat{\mathbf{w}}^C,S^*)+
(1+(1/\gamma))  \sum_{q \in C} \hat{{w}}^C_q d(q^{S^*},X)^2 \\
& \leq &
(1+\gamma)(1+\sigma) \OPT_{k,z}(P) +
(1+(1/\gamma))  \sum_{q \in C} \hat{{w}}^C_q d(q^{S^*},X)^2 \\
&& \mbox{(since $(C,\mathbf{w}^T)$ is a $\sigma$-approximate coreset).}
\end{eqnarray*}

We now concentrate on the term $\sum_{q \in C} \hat{{w}}^C_q
d(q^{S^*},X)^2$. First observe that, since $X\subset T$ contains the point in $T$ closest to
$q^{S^*}$, we have $d(q^{S^*},X) = d(q^{S^*},T)$ and
\alg{CoverWithBalls} guarantees that $d(q^{S^*},T) \leq
(\gamma/\sqrt{2\beta}) (R+d(q^{S^*},C))$, where $R$ is the parameter
used in \alg{CoverWithBalls}.  Also, for $q \in C$, $d(q^{S^*},C) \leq
d(q^{S^*},q)$. Now,
\begin{eqnarray*}
\sum_{q \in C} \hat{{w}}^C_q d(q^{S^*},X)^2 
& \leq &
(\gamma^2/(2\beta))
\sum_{q \in C} \hat{{w}}^C_q (R^2+d(q,S^*)^2) \\
& \leq &
(\gamma^2/(2\beta)) 
\left(
((|P|-z)/|P|) \sum_{i=1}^L |P_i| \cdot R_i^2
+ \sum_{q \in C} \hat{{w}}^C_q  d(q,S^*)^2
\right) \\
& \leq &
(\gamma^2/(2\beta)) 
\left(
\sum_{i=1}^L \cost(P_i,S_i)
+ \sum_{q \in C} \hat{{w}}^C_q  d(q,S^*)^2
\right) \\
& \leq &
(\gamma^2/(2\beta)) 
\left(
\beta \sum_{i=1}^L \OPT_{k+z}(P_i)
+ \cost(C,\hat{\mathbf{w}}^C,S^*) 
\right) \\
& \leq &
(\gamma^2/2) 
\left(
\sum_{i=1}^L \OPT_{k+z}(P_i)
+ \cost(C,\hat{\mathbf{w}}^C,S^*) 
\right)
\;\; (\mbox{since $\beta \geq 1$}).
\end{eqnarray*}
Using the triangle inequality and Fact~\ref{fact:opt},
it is easy to show that $\sum_{i=1}^L \OPT_{k+z}(P_i) \leq 4 \cdot \OPT_{k,z}(P)$.
Moreover, since $(C,\mathbf{w}^C)$ is a $\sigma$-approximate coreset
for $P$ with respect to $k$ and $z$,
$\cost(C,\hat{\mathbf{w}}^C,S^*)  \leq (1+\sigma) \OPT_{k,z}(P)$. Consequently,
$\sum_{q \in C} \hat{{w}}^C_q d(q^{S^*},X)^2 \leq (\gamma^2/2)(5+\sigma) \OPT_{k,z}(P)$. Putting it all together and recalling that 
$\sigma=4\gamma+4\gamma^2 \leq 1/2$, we conclude that
\begin{eqnarray*}
\lefteqn{\cost(P \backslash Z^*,X) } \\
& \leq & {(1+2\sigma)} 
\left(
(1+\gamma)(1+\sigma) + (1+1/\gamma)(\gamma^2/2)(5+\sigma)
\right) \cdot \OPT_{k,z}(P) \\
&\leq & (1+27\gamma) \OPT_{k,z}(P).
\end{eqnarray*}
\end{proof}
\subsection{Complete algorithm} \label{subsec-complete}
Let \alg{SeqWeightedkMeansOut} be a sequential algorithm for weighted 
k-means with $z$ outliers, which, given in input a weighted set
$(T,\mathbf{w}^T)$ returns a (possibly
bicriteria) solution $S$ of $\rho k$ centers such that
$\cost(T,\hat{\mathbf{w}}^T,S) \leq \alpha \cdot
\OPT_{k,z}(T,\mathbf{w})$, where $\rho \geq 1$ and
$\hat{\mathbf{w}}^T$ is obtained from $\mathbf{w}$ by scaling $\tau z$
units of weight from the points of $T$ farthest from $S$, for some $\tau
\geq 1$. For $\gamma >0$, the complete algorithm first runs the
2-round \alg{MRcoreset}$(P,\rho k, \tau z,\gamma)$ algorithm, to
extract a weighted coreset $(T,\mathbf{w}^T)$. Then, in a third
round, the coreset is gathered in a single reducer which runs
\alg{SeqWeightedkMeansOut}$(T,\mathbf{w}^T,k,z)$ to compute the final
solution $S$. We have:

\begin{theorem} \label{thm:final}
For $0 < \gamma \leq \sqrt{3/8}-1/2$, the above 3-round MapReduce
algorithm computes a solution $S$ of at most $\rho k$ centers such
that
\[
\cost(P \backslash \out_{\tau z}(P,S),S) 
\leq (\alpha+\BO{\gamma}) \cdot \OPT_{k,z}(P),
\]
and requires 
$\BO{|P|^{2/3} \cdot (\rho k + \tau z)^{1/3} \cdot 
(8 \sqrt{2 \beta}/\gamma)^{2D}
\cdot \log^2 |P|}$ local memory.
\end{theorem}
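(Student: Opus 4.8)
The plan is to chain together the two coreset guarantees of Theorem~\ref{thm:quality} with the approximation promise of \alg{SeqWeightedkMeansOut}, the delicate point throughout being the need to pass back and forth between ``removing $\tau z$ farthest points of $P$'' and ``subtracting $\tau z$ units of weight from $T$.'' Since the complete algorithm calls \alg{MRcoreset}$(P,\rho k,\tau z,\gamma)$, Theorem~\ref{thm:quality} applied with $k\to\rho k$ and $z\to\tau z$ tells us that $(T,\mathbf{w}^T)$ is simultaneously a $\sigma$-approximate coreset and a $27\gamma$-centroid set for $P$ with respect to $\rho k$ and $\tau z$, where $\sigma=4\gamma+4\gamma^2\le 1/2$. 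I would use the approximate-coreset inequality of Definition~\ref{def-approx} in \emph{both} directions, keeping in mind it is valid for any center set of size at most $\rho k$ and any removed set of size at most $\tau z$.

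First I would relate the true cost of $S$ to its weighted cost on the coreset. Let $\hat{\mathbf{w}}^T$ be the weight function obtained by \alg{SeqWeightedkMeansOut} when it subtracts $\tau z$ units of weight from the points of $T$ farthest from $S$. Because the weights are integral, this subtraction amounts to un-assigning a concrete set $Z'\subset P$ of exactly $\tau z$ input points from their proxies, so $\hat{\mathbf{w}}^T=\hat{\mathbf{w}}_{Z'}$ in the notation of Definition~\ref{def-approx}. Invoking the lower-bound direction of the $\sigma$-approximate coreset inequality on the pair $(S,Z')$, and then using that $\out_{\tau z}(P,S)$ is the cost-minimizing removal of $\tau z$ outliers from $P$, I obtain
\[
\cost(P\backslash\out_{\tau z}(P,S),S)\le \cost(P\backslash Z',S)\le \tfrac{1}{1-\sigma}\cost(T,\hat{\mathbf{w}}^T,S)\le \tfrac{\alpha}{1-\sigma}\,\OPT_{k,z}(T,\mathbf{w}),
\]
where the final inequality is exactly the guarantee of \alg{SeqWeightedkMeansOut}.

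It then remains to charge $\OPT_{k,z}(T,\mathbf{w})$ against $\OPT_{k,z}(P)$, and here the centroid-set reasoning enters. Let $S^*$ be the optimal $(k,z)$-solution on $P$, set $Z^*=\out_z(P,S^*)$, and define $X=\{p^T:\;p\in S^*\}\subseteq T$, so that $|X|\le k$ makes $X$ feasible for the weighted $(k,z)$ problem on $T$. Using optimality of the weight removal followed by the upper-bound direction of the approximate-coreset inequality (again admissible since $|X|\le k\le\rho k$ and $|Z^*|=z\le\tau z$), I would get $\OPT_{k,z}(T,\mathbf{w})\le\cost(T,\hat{\mathbf{w}}_{Z^*},X)\le(1+\sigma)\cost(P\backslash Z^*,X)$. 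The term $\cost(P\backslash Z^*,X)$ is precisely what the centroid-set computation in the proof of Theorem~\ref{thm:quality} controls: repeating that computation verbatim but initialized with the $(k,z)$-optimal $S^*$ in place of the $(\rho k,\tau z)$-optimal one yields $\cost(P\backslash Z^*,X)\le(1+27\gamma)\OPT_{k,z}(P)$, since every inequality used there needs only $|X|\le\rho k$, $|Z^*|\le\tau z$, and $\OPT_{\rho k,\tau z}(P)\le\OPT_{k,z}(P)$ (Fact~\ref{fact:opt}). Chaining the two estimates and absorbing the prefactor $\alpha(1+\sigma)(1+27\gamma)/(1-\sigma)=\alpha+\BO{\gamma}$, using $\sigma=\BO{\gamma}$ and $\alpha=\BO{1}$, delivers the claimed approximation ratio.

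For the memory bound I would charge the three rounds separately: Round~1 needs $\BO{|P|/L}$ per reducer, Round~2 needs $\BO{|P|/L+|C|}$, and Round~3 gathers the whole coreset in a single reducer, needing $\BO{|T|}$, with $|C|$ and $|T|$ taken from Lemma~\ref{lem:mazzetto} for $k'=\rho k+\tau z$. Choosing $L=\BT{(|P|/(\rho k+\tau z))^{1/3}}$ balances $|P|/L$ against $|T|$ and leaves $|T|$ as the dominant term; substituting this $L$ into $|T|=\BO{L^2(\rho k+\tau z)(8\sqrt{2\beta}/\gamma)^{2D}\log^2|P|}$ reproduces exactly $\BO{|P|^{2/3}(\rho k+\tau z)^{1/3}(8\sqrt{2\beta}/\gamma)^{2D}\log^2|P|}$, which dominates the Round~1 and Round~2 requirements. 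I expect the genuine obstacle to be the outlier bookkeeping rather than the algebra: one must cross between point-removal on $P$ and weight-subtraction on $T$ in opposite directions, invoking the optimality of farthest-point removal on the $P$ side and of the weight subtraction on the $T$ side so that each use of Definition~\ref{def-approx} points the correct way. The secondary subtlety is that Theorem~\ref{thm:quality} states its centroid guarantee for $\rho k$ centers, whereas bounding $\OPT_{k,z}(T,\mathbf{w})$ demands a $k$-center witness; this is what forces the re-run of the centroid argument from the true $(k,z)$-optimal solution, which stays admissible precisely because $k\le\rho k$ and $z\le\tau z$.
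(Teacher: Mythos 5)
Your proposal is correct and follows essentially the same route as the paper's proof: the lower-bound direction of the $\sigma$-approximate coreset property (with the weight subtraction realized as an explicit point set $\hat{Z}=Z'$) bounds $\cost(P\backslash\out_{\tau z}(P,S),S)$ by $(1+\BO{\gamma})\alpha\cdot\OPT_{k,z}(T,\mathbf{w})$, and then the re-run of the centroid-set argument of Theorem~\ref{thm:quality} from the $(k,z)$-optimal $S^*$ (valid since $k\le\rho k$, $z\le\tau z$, and $\OPT_{\rho k,\tau z}(P)\le\OPT_{k,z}(P)$) together with the upper-bound direction of the coreset inequality gives $\OPT_{k,z}(T,\mathbf{w})\le(1+\BO{\gamma})\OPT_{k,z}(P)$, exactly as in the paper, including the choice $L=(|P|/(\rho k+\tau z))^{1/3}$ for the memory bound. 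The only differences are cosmetic (you subtract the weights of $Z^*$ rather than of $\out_z(P,X)$ when bounding $\OPT_{k,z}(T,\mathbf{w})$, and you spell out the per-round memory accounting), so no further comparison is needed.
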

\begin{proof}
Let $T$ be the coreset computed at Round 2, and let  $\hat{Z}\subseteq P$ be such that the scaled weight function 
 $\hat{\mathbf{w}}^T$, associated to the solution $S$ computed in Round 3, can be obtained from $\mathbf{w}^T$ by subtracting the contribution of each point in $\hat{Z}$ from the weight of its proxy in $T$. Clearly, $|\hat{Z}| \leq \tau z$ and $\cost(P \backslash \out_{\tau z}(P,S),S)\leq \cost(P \backslash \hat{Z},S)$. Now, let $\sigma=4\gamma+4\gamma^2\leq 1/2$. We know from Theorem~\ref{thm:quality} that $(T,\mathbf{w}^{T})$ is a $\sigma$-approximate coreset for $P$ with respect to $\rho k$ and $\tau z$.  We have: 
\begin{eqnarray*}
\cost(P \backslash \hat{Z},S) &\leq& \frac{1}{1-\sigma}\cost(T,\hat{\mathbf{w}}^T,S)\\
&\leq& (1+2\sigma)\cost(T,\hat{\mathbf{w}}^T,S)
\leq (1+\BO{\gamma})\cdot\alpha\cdot\OPT_{k,z}(T,\mathbf{w}).
\end{eqnarray*}
Since $\OPT_{\rho k,\tau z}(P)\leq \OPT_{k,z}(P)$, 
Fact~\ref{fact:opt} and Lemma~\ref{lem:mazzetto} can be used to
prove that 
both $(C,\mathbf{w}^{C})$ (computed in Round 1) and 
$(T,\mathbf{w}^{T})$ are $\sigma$-approximate coresets for $P$  with respect to $k$ and $z$. A simple adaptation of the proof of 
Theorem~\ref{thm:quality} shows that $(T,\mathbf{w}^{T})$ is a $27\gamma$-centroid set for $P$ with respect to $k$ and $z$. Now, let $X\subseteq T$ be the set of at most $k$ points of Definition~\ref{def-centroid}, and let $\overline{\mathbf{w}}^T$ be obtained from $\mathbf{w}^T$ by subtracting the contributions of
the elements in $\out_z(P,X)$ from the weights of their proxies. By the optimality of $\OPT_{k,z}(T,\mathbf{w})$ we have that
\begin{eqnarray*}
\OPT_{k,z}(T,\mathbf{w})&\leq&\cost(T,\overline{\mathbf{w}}^T,X) \\
&\leq & (1+\sigma)\cost(P\backslash \out_z(P,X), X)\\
&\leq &(1+\sigma)(1+27\gamma)\cdot \OPT_{k,z}(P) = (1+\BO{\gamma})\cdot\OPT_{k,z}(P).
\end{eqnarray*}
Putting it all together, we conclude that 
\begin{eqnarray*}
\cost(P \backslash \out_{\tau z}(P,S),S) \leq  \cost(P \backslash \hat{Z},S
\leq (\alpha+\BO{\gamma}) \cdot \OPT_{k,z}(P).
\end{eqnarray*}
The local memory bound follows from Lemma~\ref{lem:mazzetto}, setting
$L = (|P|/(\rho k + \tau z))^{1/3}$.
\end{proof}

\subsection{Improved local memory}\label{subsec-improved}
The local memory of the algorithm presented in the previous
subsections can be substantially improved by modifying Round 2 of
\alg{MRcoreset}$(P,k,z,\gamma)$ as follows. Now, each reducer first
determines a $\beta$-approximate solution $S_C$ to weighted k-means
(without outliers) on $(C,\mathbf{w}^C)$, with $k'=k+z$ centers, and
then runs \alg{CoverWithBalls}$(C,S_C,\gamma/\sqrt{2\beta},R)$,
yielding a weighted set $C'$, whose size is a factor $|L|$ less than
the size of $C$.  Finally, the reducer runs
\alg{CoverWithBalls}$(P_i,C',\gamma/\sqrt{2\beta},R)$. A small
adaptation to \alg{CoverWithBalls} is required in this case: when
point $p \in C$ is mapped to a proxy $q \in C'$, the weight of $q$ is
increased by $w^C_p$ rather than by one. The analysis of this modified construction is given below. 
\begin{lemma} \label{lem:improvement_C'}
Let $(C',\mathbf{w}^{C'})$ be the weighted coreset
computed by {\rm \alg{CoverWithBalls}}$(C,S_C,\gamma/\sqrt{2\beta},R)$ with $0 < \gamma \leq \sqrt{3/8}-1/2$. Then,
 there exists a proxy function $\pi^{C'}:P\to C'$ such that
\[
\sum_{p \in P} d(p,\pi^{C'}(p))^2 \leq 18 \gamma^2 \cdot \OPT_{k'}(P).
\]
\end{lemma}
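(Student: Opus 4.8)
The plan is to define the required proxy map $\pi^{C'}$ as a composition and then charge the displacement of each input point to two separate contributions. The earlier construction already supplies a proxy $\pi^C:P\to C$ with $w^C_q=|\{p\in P:\pi^C(p)=q\}|$, while the call \alg{CoverWithBalls}$(C,S_C,\gamma/\sqrt{2\beta},R)$ supplies a proxy $\rho:C\to C'$ in which each $q'\in C'$ accumulates the weight $w^C_q$ of every $q\in C$ mapped to it, exactly as prescribed by the modified weight update. I would therefore set $\pi^{C'}(p)=\rho(\pi^C(p))$, so that $w^{C'}_{q'}=|\{p\in P:\pi^{C'}(p)=q'\}|$ as required. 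Applying the squared triangle inequality of Fact~\ref{fact:triangle} with $c=1$ to the triple $p$, $\pi^C(p)$, $\pi^{C'}(p)$ gives $d(p,\pi^{C'}(p))^2\le 2\,d(p,\pi^C(p))^2+2\,d(\pi^C(p),\rho(\pi^C(p)))^2$, so summing over $P$ reduces the claim to bounding two sums.

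The first sum, $\sum_{p\in P}d(p,\pi^C(p))^2$, is at most $4\gamma^2\OPT_{k'}(P)$ directly by Lemma~\ref{lem:mazzetto}. For the second, I would regroup by proxy, writing $\sum_{p\in P}d(\pi^C(p),\rho(\pi^C(p)))^2=\sum_{q\in C}w^C_q\,d(q,\rho(q))^2$. The guarantee of \alg{CoverWithBalls} gives $d(q,\rho(q))\le(\gamma/\sqrt{2\beta})\max\{R,d(q,S_C)\}$, hence $d(q,\rho(q))^2\le(\gamma^2/(2\beta))(R^2+d(q,S_C)^2)$, and the sum splits into an $R^2$ part and a $\cost(C,\mathbf{w}^C,S_C)$ part. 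For the former, since the weights sum to $|P|$ and $R^2|P|=\sum_{j=1}^L|P_j|R_j^2=\sum_{j=1}^L\cost(P_j,S_j)$, the $\beta$-approximation of each $S_j$ together with the partition inequality $\sum_{j=1}^L\OPT_{k'}(P_j)\le 4\OPT_{k'}(P)$ (proved by the same projection argument used for the analogous bound in the proof of Theorem~\ref{thm:quality}) yields $R^2|P|\le 4\beta\OPT_{k'}(P)$.

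The main obstacle is the remaining term $\cost(C,\mathbf{w}^C,S_C)=\sum_{q\in C}w^C_q\,d(q,S_C)^2$. Here I would first observe that Lemma~\ref{lem:mazzetto} and the outlier-free specialization of Lemma~\ref{lem:approx} make $(C,\mathbf{w}^C)$ a $\sigma$-approximate coreset for $P$ with respect to $k'$, with $\sigma=4\gamma+4\gamma^2\le 1/2$. As $S_C$ is $\beta$-approximate, $\cost(C,\mathbf{w}^C,S_C)\le\beta\,\OPT_{k'}(C,\mathbf{w}^C)$, so it remains to bound $\OPT_{k'}(C,\mathbf{w}^C)$. The subtle point is that the optimal $k'$-center set $S^*$ for $P$ is not in general a subset of $C$ and so cannot be used as a feasible solution for the weighted instance on $C$; I would instead project it, taking $X_C=\{s^C:s\in S^*\}\subseteq C$, and use a two-step triangle argument to show $d(q,X_C)\le 2\,d(q,S^*)$ for every $q\in C$, whence $\OPT_{k'}(C,\mathbf{w}^C)\le\cost(C,\mathbf{w}^C,X_C)\le 4\,\cost(C,\mathbf{w}^C,S^*)\le 4(1+\sigma)\OPT_{k'}(P)$ by the coreset property. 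Combining, $\cost(C,\mathbf{w}^C,S_C)\le 4\beta(1+\sigma)\OPT_{k'}(P)\le 6\beta\OPT_{k'}(P)$, so the second sum is at most $(\gamma^2/(2\beta))(4\beta+6\beta)\OPT_{k'}(P)=5\gamma^2\OPT_{k'}(P)$. Feeding the two sums back into the triangle decomposition gives $\sum_{p\in P}d(p,\pi^{C'}(p))^2\le 2(4\gamma^2)\OPT_{k'}(P)+2(5\gamma^2)\OPT_{k'}(P)=18\gamma^2\OPT_{k'}(P)$, as claimed; note that the factor-$4$ loss incurred when projecting $S^*$ onto $C$ is precisely what forces the constant $18$ rather than the smaller value obtainable if $S^*$ were contained in $C$.
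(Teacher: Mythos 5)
Your proposal is correct and follows essentially the same route as the paper's proof: the same composed proxy $\phi^{C'}\circ\pi^C$, the same $c=1$ squared triangle inequality splitting the sum into $2\cdot 4\gamma^2$ and $2\cdot(\gamma^2/(2\beta))(4\beta+6\beta)$ contributions, and the same projection argument bounding $\OPT_{k'}(C,\mathbf{w}^C)\leq 4(1+\sigma)\OPT_{k'}(P)\leq 6\,\OPT_{k'}(P)$ via the $\sigma$-approximate coreset property with $z=0$. Your write-up is in fact slightly more explicit than the paper's at the one subtle step (the paper states $\OPT_{k'}(C,\mathbf{w}^C)\leq 4\cost(C,\mathbf{w}^C,\overline{S}^*)$ without spelling out the projection of $\overline{S}^*$ onto $C$), but the argument is the same.
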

\begin{proof}
Let $\pi^{C}:P\to C$ be the proxy function of Lemma~\ref{lem:mazzetto}, and let $\phi^{C'}:C\to C'$ be the map induced by \alg{CoverWithBalls}$(C,S_C,\gamma/\sqrt{2\beta},R)$. Define $\pi^{C'}:P\to C'$ as $\phi^{C'}\circ\pi^{C}$. By the triangle inequality and Lemma~\ref{lem:mazzetto} we have that
\begin{eqnarray*}
\sum_{p \in P} d(p,\pi^{C'}(p))^2&\leq&2\sum_{p \in P} d(p,\pi^{C}(p))^2+2\sum_{p \in P} d(\pi^{C}(p),\pi^{C'}(p))^2\\
&\leq&8\gamma^2\cdot\OPT_{k'}(P)+2\sum_{q \in C} w_q^Cd(q,\phi^{C'}(q))^2.
\end{eqnarray*}
The latter term can be bounded by using the properties of \alg{CoverWithBalls} as follows. Let $\overline{S}^*$ be the optimal centers for $P$ with respect to $k'$. We have that
\begin{eqnarray*}
\sum_{q \in C} w^C_q d(q,\phi^{C'}(q))^2 
& \leq &
(\gamma^2/(2\beta))
\sum_{q \in C} w^C_q (R^2+d(q,S_C)^2) \\
& \leq &
(\gamma^2/(2\beta)) 
\left(
\sum_{i=1}^L |P_i| \cdot R_i^2
+ \cost(C,\mathbf{w}^C,S_C)\right) \\
& \leq &
(\gamma^2/(2\beta)) 
\left(
\sum_{i=1}^L \cost(P_i,S_i)
+ \cost(C,\mathbf{w}^C,S_C)
\right) \\
& \leq &
(\gamma^2/(2\beta)) 
\left(
4\beta \sum_{i=1}^L \cost(P_i,\overline{S}^*)
+ \beta \cdot\OPT_{k'}(C,\mathbf{w}^C)\right)  \\
& \leq &
(\gamma^2/2) 
\left(
4\cdot\OPT_{k'}(P)
+ \OPT_{k'}(C,\mathbf{w}^C)
\right).
\end{eqnarray*}
By combining the arguments of  Lemma~\ref{lem:approx}and Lemma~\ref{lem:mazzetto},  we obtain that $(C,\mathbf{w}^C)$ is a 
$\sigma$-approximate coreset for $P$ with respect to $k'$ and $z=0$, with $\sigma=4\gamma+4\gamma^2\leq 1/2$. Thus, using again the triangle inequality
\begin{eqnarray*}
\OPT_{k'}(C,\mathbf{w}^C)&\leq&4\cost(C,\mathbf{w}^C,\overline{S}^*)\\
&\leq&4(1+\sigma)\cost(P,\overline{S}^*)\\
&\leq&6\cdot\OPT_{k'}(P).
\end{eqnarray*}
Putting it all together, we conclude that 
\begin{eqnarray*}
\sum_{p \in P} d(p,\pi^{C'}(p))^2 \leq 18 \gamma^2 \cdot \OPT_{k'}(P).
\end{eqnarray*}
\end{proof}

\begin{lemma} \label{lem:improvement_T}
Let $(T,\mathbf{w}^T)$ be the weighted coreset
computed by {\rm \alg{MRcoreset}}$(P,k,z,\gamma)$, and let
$\pi^T$ be the corresponding proxy function. We have:
\[
\sum_{p \in P} d(p,\pi^T(p))^2 \leq 11 \gamma^2 \cdot \OPT_{k'}(P).
\]
\end{lemma}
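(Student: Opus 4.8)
The plan is to bound the proxy distances induced by the final invocation \alg{CoverWithBalls}$(P_i,C',\gamma/\sqrt{2\beta},R)$ by reducing everything to quantities already controlled by Lemma~\ref{lem:improvement_C'} and by the threshold radius $R$. Since $\pi^T$ is exactly the proxy map produced by this last cover, for every $p\in P$ the guarantee of \alg{CoverWithBalls} (applied with reference set $C'$) yields $d(p,\pi^T(p)) \le (\gamma/\sqrt{2\beta})\max\{R,d(p,C')\}$. I would therefore start from
\[
\sum_{p\in P} d(p,\pi^T(p))^2 \le \frac{\gamma^2}{2\beta}\sum_{p\in P}\max\{R,d(p,C')\}^2 \le \frac{\gamma^2}{2\beta}\left(|P|\,R^2 + \sum_{p\in P} d(p,C')^2\right),
\]
splitting the squared maximum into the two additive contributions $R^2$ and $d(p,C')^2$.

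Next I would handle the two sums separately. For the first, I would use the definition $R^2=\sum_{j}|P_j|R_j^2/|P|$ together with $R_j^2=\cost(P_j,S_j)/|P_j|$ to obtain $|P|\,R^2=\sum_j\cost(P_j,S_j)$. Since each $S_j$ is a $\beta$-approximate $k'$-means solution on $P_j$, this is at most $\beta\sum_j\OPT_{k'}(P_j)$, and the standard projection argument (map each optimal center of $P$ to its closest point in $P_j$, which at most doubles every point-to-center distance for points of $P_j$) gives $\OPT_{k'}(P_j)\le 4\cost(P_j,\overline{S}^*)$ and hence $\sum_j\OPT_{k'}(P_j)\le 4\,\OPT_{k'}(P)$; thus $|P|\,R^2\le 4\beta\,\OPT_{k'}(P)$. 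For the second sum I would observe that $\pi^{C'}(p)\in C'$ implies $d(p,C')\le d(p,\pi^{C'}(p))$ for every $p$, so that Lemma~\ref{lem:improvement_C'} directly gives $\sum_{p\in P} d(p,C')^2\le 18\gamma^2\,\OPT_{k'}(P)$.

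Combining the two estimates, the right-hand side becomes $(\gamma^2/(2\beta))(4\beta+18\gamma^2)\,\OPT_{k'}(P)=(2\gamma^2+9\gamma^4/\beta)\,\OPT_{k'}(P)$. Finally, using $\beta\ge 1$ and $\gamma\le\sqrt{3/8}-1/2<1$ (so $\gamma^4\le\gamma^2$), the residual term satisfies $9\gamma^4/\beta\le 9\gamma^2$, which delivers exactly the claimed bound $11\gamma^2\,\OPT_{k'}(P)$. I do not expect a genuine obstacle here: the only point requiring care is that the final cover is taken against $C'$ rather than against $C$ or the original centers, so the reference distance to be controlled is $d(p,C')$, and the whole argument hinges on passing from $d(p,C')$ to the already-bounded $d(p,\pi^{C'}(p))$ via Lemma~\ref{lem:improvement_C'}; the $R^2$ term is then dispatched by the same per-partition optimality estimate used in the earlier lemmas.
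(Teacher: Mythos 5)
Your proposal is correct and follows essentially the same route as the paper's proof: the same split of the \alg{CoverWithBalls} guarantee into the $|P|R^2 = \sum_i |P_i|R_i^2 \leq 4\beta\,\OPT_{k'}(P)$ term and the $\sum_{p\in P} d(p,C')^2 \leq \sum_{p\in P} d(p,\pi^{C'}(p))^2 \leq 18\gamma^2\,\OPT_{k'}(P)$ term via Lemma~\ref{lem:improvement_C'}, followed by the same arithmetic $(\gamma^2/(2\beta))(4\beta+18\gamma^2) \leq 11\gamma^2$ using $\beta\geq 1$ and $\gamma<1$. The only difference is that you spell out explicitly the projection argument giving $\sum_j \OPT_{k'}(P_j)\leq 4\,\OPT_{k'}(P)$ and the passage $d(p,C')\leq d(p,\pi^{C'}(p))$, which the paper leaves implicit.
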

\begin{proof}
By the properties of the output of \alg{CoverWithBalls} and Lemma~\ref{lem:improvement_C'}, we have that
\begin{eqnarray*}
\sum_{p \in P} d(p,\pi^T(p))^2
& \leq &
(\gamma^2/(2\beta)) 
\left(
\sum_{i=1}^L |P_i| \cdot R_i^2
+ \sum_{p \in P} d(p,\pi^{C'}(p))^2 \right) \\
& \leq &
(\gamma^2/(2\beta)) 
(
4\beta+18\gamma^2)\cdot\OPT_{k'}(P)
\\
&\leq&
11\gamma^2\cdot\OPT_{k'}(P).
\end{eqnarray*}
\end{proof}

From Fact~\ref{fact:opt} and Lemmas~\ref{lem:approx} and~\ref{lem:improvement_T}, it follows that $(T,\mathbf{w}^T)$ is an 
$(11\gamma^2+2\sqrt{11}\gamma)$-approximate coreset for $P$ with respect to $k$ and $z$. Moreover,  by a slight adaptation 
of the proof of Theorem~\ref{thm:quality} and by setting $\gamma \leq (\sqrt{3}-\sqrt{2})/6$, we have that $(T,\mathbf{w}^T)$ is also a 
$47\gamma$-centroid set for $P$ with respect to $k$ and $z$. 

The main result is stated in the following theorem.

\begin{theorem} \label{thm:final_improved}
For $0 < \gamma \leq (\sqrt{3}-\sqrt{2})/6$, the modified
3-round MapReduce algorithm computes a solution $S$ of at most $\rho
k$ centers such that
\[
\cost(P \backslash \out_{\tau z}(P,S),S) 
\leq (\alpha+\BO{\gamma}) \cdot \OPT_{k,z}(P).
\]
The algorithm requires 
$\BO{|P|^{1/2} \cdot (\rho k + \tau z)^{1/2} \cdot 
(8 \sqrt{2 \beta}/\gamma)^{2D}
\cdot \log^2 |P|}$ local memory.
\end{theorem}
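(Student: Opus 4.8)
The plan is to reuse, almost verbatim, the two-pronged argument from the proof of Theorem~\ref{thm:final}, substituting the improved coreset-quality constants and re-deriving only the memory bound. The two facts I rely on are exactly those stated in the discussion immediately preceding the theorem: the coreset $(T,\mathbf{w}^T)$ produced by the modified \alg{MRcoreset} is an $(11\gamma^2+2\sqrt{11}\gamma)$-approximate coreset (via Fact~\ref{fact:opt} and Lemmas~\ref{lem:approx} and~\ref{lem:improvement_T}) and a $47\gamma$-centroid set for $P$. As in Theorem~\ref{thm:final}, the complete algorithm invokes \alg{MRcoreset}$(P,\rho k,\tau z,\gamma)$, so these properties hold with respect to $\rho k$ and $\tau z$; invoking $\OPT_{\rho k,\tau z}(P)\le\OPT_{k,z}(P)$ together with Fact~\ref{fact:opt} and Lemma~\ref{lem:improvement_T} then transfers the approximate-coreset property to $k$ and $z$, while the adaptation of the proof of Theorem~\ref{thm:quality} already sketched gives the $47\gamma$-centroid property with respect to $k$ and $z$.

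Writing $\sigma'=11\gamma^2+2\sqrt{11}\gamma$, the hypothesis $\gamma\le(\sqrt3-\sqrt2)/6$ ensures $\sigma'\le 1/2$, which is what both prongs need. For the first prong I let $\hat Z\subseteq P$ be the set whose proxies absorb the $\tau z$ units of weight scaled away by $\hat{\mathbf{w}}^T$, so that $|\hat Z|\le\tau z$ and $\cost(P\backslash\out_{\tau z}(P,S),S)\le\cost(P\backslash\hat Z,S)$; the approximate-coreset property then yields
\[
\cost(P\backslash\hat Z,S)\le\frac{1}{1-\sigma'}\cost(T,\hat{\mathbf{w}}^T,S)\le(1+\BO{\gamma})\,\alpha\,\OPT_{k,z}(T,\mathbf{w}).
\]
For the second prong I take the $k$-point set $X\subseteq T$ from the centroid property and the weights $\overline{\mathbf{w}}^T$ obtained by removing the contributions of $\out_z(P,X)$, so that optimality of $\OPT_{k,z}(T,\mathbf{w})$ and the approximate-coreset bound give
\[
\OPT_{k,z}(T,\mathbf{w})\le\cost(T,\overline{\mathbf{w}}^T,X)\le(1+\sigma')(1+47\gamma)\,\OPT_{k,z}(P)=(1+\BO{\gamma})\,\OPT_{k,z}(P).
\]
Chaining the two displays gives $\cost(P\backslash\out_{\tau z}(P,S),S)\le(\alpha+\BO{\gamma})\OPT_{k,z}(P)$, as claimed.

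The genuinely new part, and the step I expect to require the most care, is the memory bound. The modified Round~2 first compresses $C$ into $C'$, whose size is a factor $|L|$ smaller than $|C|$, before covering each $P_i$; consequently the final coreset obeys $|T|=\BO{|L|\cdot(\rho k+\tau z)\cdot(8\sqrt{2\beta}/\gamma)^{2D}\log^2|P|}$, shedding one factor of $|L|$ relative to Lemma~\ref{lem:mazzetto}. The binding constraints on local memory are then the $\BO{|P|/|L|}$ needed to hold each $P_i$ in Rounds~1--2 and the $\BO{|T|}$ needed to gather the coreset in the single Round~3 reducer; the intermediate set $C$, of size $\BO{|L|\cdot(\rho k+\tau z)\cdot(8\sqrt{2\beta}/\gamma)^{D}\log|P|}$, is dominated by the latter. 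Balancing $|P|/|L|$ against $|T|$ by setting $L=(|P|/(\rho k+\tau z))^{1/2}$ makes both terms equal to $\BO{|P|^{1/2}(\rho k+\tau z)^{1/2}(8\sqrt{2\beta}/\gamma)^{2D}\log^2|P|}$, which is the stated bound; I would verify in passing that this choice of $L$ keeps $|C|$ within the same budget, which holds since $(8\sqrt{2\beta}/\gamma)^{D}\log|P|\ge 1$.
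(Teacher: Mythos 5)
Your proposal is correct and follows essentially the same route as the paper: the approximation bound is obtained by rerunning the two-pronged argument of Theorem~\ref{thm:final} with the improved constants ($\sigma'=11\gamma^2+2\sqrt{11}\gamma$ and the $47\gamma$-centroid property), and the memory bound by noting that the modified Round~2 sheds a factor $|L|$ from $|T|$ and then setting $L=(|P|/(\rho k+\tau z))^{1/2}$. Your write-up is in fact slightly more explicit than the paper's (which dismisses the approximation part as a ``straightforward adaptation'' and only records the size bounds for $C'$ and $T$), including the useful check that $|C|$ stays within the same budget.
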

\begin{proof}
The bound on the approximation factor is obtained as a straightforward adaptation of the proof of Theorem~\ref{thm:final}. For what concerns the local memory requirements, the result in \cite{MazzettoPP19} concerning the size of the output of \alg{CoverWithBalls} yields:
\begin{eqnarray*}
|C'| & = & \BO{(\rho k + \tau z)\cdot (8 \sqrt{2\beta}/\gamma)^{D}\cdot \log |P|}, \\
|T| & = & \BO{|L|\cdot (\rho k + \tau z)\cdot (8 \sqrt{2\beta}/\gamma)^{2D}\cdot \log^2 |P|}.
\end{eqnarray*}
Setting $L = (|P|/(\rho k + \tau z))^{1/2}$, we get the desired upper bound.
\end{proof}

\section{Instantiation with different sequential algorithms for weighted k-means} \label{sec-wkmeans}

We briefly outline how to adapt two state-of-the-art sequential
algorithms for k-means with $z$ outliers in general metrics, namely,
\alg{LS-Outlier} by \cite{GuptaKLMV17} and \alg{k-Means-Out} by
\cite{FriggstadKRS19}, to handle the weighted variant of the
problem. The algorithms are bicriteria, in the sense that
the approximation guarantee is obtained at the expense of
a larger number of outliers (\alg{LS-Outlier}), or a
larger number of centers (\alg{k-Means-Out}). Then, we assess the
accuracy-resource tradeoffs attained by the MapReduce algorithm of
Section~\ref{sec:algorithm}, when these algorithms are employed in its
final round.

Given a set of points $P$ and parameters $k$ and $z$,
\alg{LS-Outlier} starts with a set $C \subset P$ of $k$ arbitrary
centers and a corresponding set $Z = \out_z(P,C)$ of outliers. Then,
for a number of iterations, it refines the selection $(C,Z)$ to
improve the value $\cost(P \backslash Z,C)$ by a factor at least
$1-\epsilon/k$, for a given $\epsilon >0$, until no such improvement
is possible.  In each iteration, first a new set $C'$ is computed
through a standard local-search \cite{KanungoMNPSW04} on $P \backslash
Z$, and then a new pair $(C_{\rm new},Z_{\rm new})$ with minimal
$\cost(P \backslash Z_{\rm new},C_{\rm new})$ is identified among the
following ones: $(C',Z \cup \out_z(P \backslash Z,C')$ and
$(C'',Z \cup \out_z(P,C'')$, where $C''$ is obtained from $C'$ 
with the most profitable swap between
a point of $P$ and a point of $C'$.

It is shown in \cite{GuptaKLMV17} that \alg{LS-Outlier} returns a pair
$(C,Z)$ such that $\cost(P \backslash Z,C) \leq 274 \cdot
\OPT_{k,z}(P)$ and $|Z| = \BO{(1/\epsilon)kz \log (|P| \Delta)}$, where $\Delta$
is the ratio between the maximum and minimum pairwise distances in
$P$. \alg{LS-Outlier} can be adapted for the weighted variant of the
problem as follows.  Let $(P,\mathbf{w})$ denote the input
pointset.
In this weighted setting, the role of a set $Z$ of $m$ outliers is played by a 
weight function $\mathbf{w}^Z$ such that $0 \leq w^Z_p \leq w_p$,
for each $p \in P$, and $\sum_{p\in P} w^Z_p = m$. The union of two
sets of outliers in the original algorithm is replaced by the
pointwise sum or pointwise maximum of the corresponding weight functions,
depending on whether the two sets are disjoint (e.g., $Z$ and
$\out_z(P \backslash Z,C')$) or not (e.g., $Z$ and $\out_z(P,C'')$).
It can be proved that with this adaptation the algorithm returns a
pair $(C,\mathbf{w}^Z)$ such that $\cost(P,
\mathbf{w}-\mathbf{w}^Z,C) \leq 274 \cdot \OPT_{k,z}(P,\mathbf{w})$
and $\sum_{p\in P} w^Z_p = \BO{(1/\epsilon)kz \log (|P| \Delta)}$.

Algorithm \alg{k-Means-Out} also implements a local search.  For given
$\rho, \epsilon >0$, the algorithm starts from an initial set $C
\subset P$ of $k$ centers and performs a number of iterations, where
$C$ is refined into a new set $C'$ by swapping a subset $Q \subset C$
with a subset $U \subset P \backslash C$ (possibly of different size),
such that $|Q|,|U| \leq \rho$ and $|C'| \leq (1+\epsilon)k$, as long
as $\cost(P \backslash \out_z(P,C'),C') < (1-\epsilon/k) \cdot
\cost(P \backslash \out_z(P,C),C)$.  It is argued in
\cite{FriggstadKRS19} that for $\rho =
(D/\epsilon)^{\BT{D/\epsilon}}$, \alg{k-Means-Out} returns a set $C$
of at most $(1+\epsilon)k$ centers such that $\cost(P \backslash
\out_z(P,C),C) \leq (1+\epsilon) \cdot \OPT_{k,z}(P)$, where $D$ is
the doubling dimension of $P$. The running time is exponential in
$\rho$, so the algorithm is polynomial when $D$ is constant.

The adaptation of \alg{k-Means-Out} for the weighted variant for an
input $(P,\mathbf{w})$ is straightforward and concerns the cost
function only. It is sufficient to substitute $\cost(P \backslash
\out_z(P,C),C)$ with $\cost(P,\hat{\mathbf{w}},C)$, where
$\hat{\mathbf{w}}$ is obtained from $\mathbf{w}$ by scaling the
weights associated with the points of $P$ farthest from $C$,
progressively until exactly $z$ units of weights overall are
subtracted.  It can be proved that with this adaptation the algorithm
returns a set $C$ of at most $(1+\epsilon)k$ centers such that
$\cost(P,\hat{\mathbf{w}},C) \leq (1+\epsilon) \cdot \OPT_{k,z}(P)$.
 
By Theorems~\ref{thm:final} and~\ref{thm:final_improved}, these two sequential strategies can be
invoked in Round 3 of our MapReduce algorithm to yield bicriteria
solutions with an additive $\BO{\gamma}$ term in the approximation
guarantee, for any sufficiently small $\gamma >0$.

\section{Conclusions}\label{sec:conclusions}
We presented a flexible, coreset-based framework able to yield a
scalable, 3-round MapReduce algorithm for k-means with $z$ outliers,
featuring an approximation quality which can be made arbitrarily close
to the one of any sequential (bicriteria) approximation algorithm for
the weighted variant of the problem, and requiring local memory
substantially sublinear in the size of the input dataset, when this
dataset has bounded dimensionality.  Our approach naturally extends to
the case of k-median clustering with outliers, where the cost function
is the sum of distances rather than of squared distances.  Future
research will target the adaptation of the state-of-the-art
non-bicriteria LP-based algorithm of \cite{KrishnaswamyLS18} to the
weighted case, and the generalization of our approach to other
clustering problems.

%
%
%
\bibliographystyle{plain}

\begin{thebibliography}{10}

\bibitem{AhmadianNSW20}
S.~Ahmadian, A.~Norouzi{-}Fard, O.~Svensson, and J.~Ward.
\newblock Better guarantees for k-means and euclidean k-median by primal-dual
  algorithms.
\newblock {\em {SIAM} J. Computing}, 49(4):97--156, 2020.

\bibitem{ArthurV07}
D.~Arthur and S.~Vassilvitskii.
\newblock k-means++: the advantages of careful seeding.
\newblock In {\em Proc. {SODA}}, pages 1027--1035, 2007.

\bibitem{AwasthiB15}
P.~Awasthi and M.F. Balcan.
\newblock {Center based clustering: A foundational perspective}.
\newblock In {\em Handbook of cluster analysis}. CRC Press, 2015.

\bibitem{BakhthemmatI20}
A.~Bakhthemmat and M.~Izadi.
\newblock Decreasing the execution time of reducers by revising clustering
  based on the futuristic greedy approach.
\newblock {\em J. Big Data}, 7(1):6, 2020.

\bibitem{BeameKS13}
P.~Beame, P.~Koutris, and D.~Suciu.
\newblock {Communication Steps for Parallel Query Processing}.
\newblock In {\em Proc. {PODS}}, pages 273--284, 2013.

\bibitem{CeccarelloPP19}
M.~Ceccarello, A.~Pietracaprina, and G.~Pucci.
\newblock Solving k-center clustering (with outliers) in {MapReduce} and
  streaming, almost as accurately as sequentially.
\newblock {\em {PVLDB}}, 12(7):766--778, 2019.

\bibitem{CharikarKMN01}
M.~Charikar, S.~Khuller, D.M. Mount, and G.~Narasimhan.
\newblock {Algorithms for Facility Location Problems with Outliers}.
\newblock In {\em Proc. {SODA}}, pages 642--651, 2001.

\bibitem{ChenAZ18}
J.~Chen, E.S. Azer, and Q.~Zhang.
\newblock A practical algorithm for distributed clustering and outlier
  detection.
\newblock In {\em Proc. NeurIPS}, pages 2253--2262, 2018.

\bibitem{CohenFS21}
V.~Cohen{-}Addad, A.E. Feldmann, and D.~Saulpic.
\newblock Near-linear time approximation schemes for clustering in doubling
  metrics.
\newblock {\em J. {ACM}}, 68(6):44:1--44:34, 2021.

\bibitem{DeanG08}
J.~Dean and S.~Ghemawat.
\newblock {MapReduce: Simplified Data Processing on Large Clusters}.
\newblock {\em {Communications of the ACM}}, 51(1):107--113, 2008.

\bibitem{DeshpandeKP20}
A.~Deshpande, P.~Kacham, and R.~Pratap.
\newblock Robust k-means++.
\newblock In {\em Proc. UAI}, pages 799--808, 2020.

\bibitem{FriggstadKRS19}
Z.~Friggstad, K.~Khodamoradi, M.~Rezapour, and M.R. Salavatipour.
\newblock Approximation schemes for clustering with outliers.
\newblock {\em {ACM} Trans. Algorithms}, 15(2):26:1--26:26, 2019.

\bibitem{GuhaLZ19}
S.~Guha, Y.~Li, and Q.~Zhang.
\newblock Distributed partial clustering.
\newblock {\em {ACM} Trans. Parallel Comput.}, 6(3):11:1--11:20, 2019.

\bibitem{GuptaKLMV17}
S.~Gupta, R.~Kumar, K.~Lu, B.~Moseley, and S.~Vassilvitskii.
\newblock Local search methods for k-means with outliers.
\newblock {\em Proc. {VLDB} Endow.}, 10(7):757--768, 2017.

\bibitem{Har-PeledM04}
S.~Har-Peled and S.~Mazumdar.
\newblock On coresets for k-means and k-median clustering.
\newblock In {\em Proc. STOC}, pages 291--300, 2004.

\bibitem{Heinonen01}
J.~Heinonen.
\newblock {\em Lectures on Analysis of Metric Spaces}.
\newblock Universitext. Springer, Berlin, 2001.

\bibitem{HennigMMR15}
C.~Hennig, M.~Meila, F.~Murtagh, and R.~Rocci.
\newblock {\em {Handbook of cluster analysis}}.
\newblock CRC Press, 2015.

\bibitem{KanungoMNPSW04}
T.~Kanungo, D.M. Mount, N.S. Netanyahu, C.D. Piatko, R.~Silverman, and A.~Y.
  Wu.
\newblock A local search approximation algorithm for k-means clustering.
\newblock {\em Comput. Geom.}, 28(2-3):89--112, 2004.

\bibitem{KrishnaswamyLS18}
R.~Krishnaswamy, S.~Li, and S.~Sandeep.
\newblock Constant approximation for k-median and k-means with outliers via
  iterative rounding.
\newblock In {\em Proc. {STOC} 2018}, pages 646--659, 2018.

\bibitem{LiG18}
S.~Li and X.~Guo.
\newblock Distributed k-clustering for data with heavy noise.
\newblock In {\em Proc. NeurIPS}, pages 7849--7857, 2018.

\bibitem{MazzettoPP19}
A.~Mazzetto, A.~Pietracaprina, and G.~Pucci.
\newblock Accurate mapreduce algorithms for k-median and k-means in general
  metric spaces.
\newblock In {\em Proc. ISAAC}, pages 34:1--34:16, 2019.

\bibitem{PietracaprinaPRSU12}
A.~Pietracaprina, G.~Pucci, M.~Riondato, F.~Silvestri, and E.~Upfal.
\newblock Space-round tradeoffs for mapreduce computations.
\newblock In {\em {Proc. ICS}}, pages 235--244, 2012.

\bibitem{SreedharKR17}
C.~Sreedhar, N.~Kasiviswanath, and P.~{Chenna Reddy}.
\newblock Clustering large datasets using k-means modified inter and intra
  clustering {(KM-I2C)} in {Hadoop}.
\newblock {\em J. Big Data}, 4:27, 2017.

\bibitem{StatmanRF20}
A.~Statman, L.~Rozenberg, and D.~Feldman.
\newblock k-means: Outliers-resistant clustering+++.
\newblock {\em {MDPI} Algorithms}, 13(12):311, 2020.

\bibitem{Wei16}
D.~Wei.
\newblock A constant-factor bi-criteria approximation guarantee for k-means++.
\newblock In {\em Proc. NIPS}, pages 604--612, 2016.

\end{thebibliography}

\end{document}